\documentclass[a4paper,UKenglish,cleveref, autoref, thm-restate]{lipics-v2021}

\title{Bounding LVR in AMMs via Secant--Tangent Divergence and Collateralized Liquidity Scaling} 

\author{Hyoungsung {Kim}}
{Korea Electronics Technology Institute, Republic of Korea}
{hyoungsung@keti.re.kr}
{https://orcid.org/0000-0002-4307-2691}
{}

\author{Yong-Suk Park\footnote{corresponding author}}
{Korea Electronics Technology Institute, Republic of Korea}
{yspark@keti.re.kr}
{https://orcid.org/0000-0002-6694-5125}
{}

\authorrunning{Hyoungsung Kim et al.} 

\Copyright{Hyoungsung Kim and Yong-Suk Park} 

\ccsdesc[100]{Applied computing~Economics}

\keywords{Adverse Selection, Automated Market Maker, Capital Efficiency, Decentralized Finance, Liquidity Provision, Loss-Versus-Rebalancing}

\category{} 

\relatedversion{} 

\acknowledgements{This work was supported by the Institute of Information \& Communications Technology Planning \& Evaluation(IITP) grant funded by the Korea government(MSIT) (No.RS-2026-25507979, AI Data Training Verification and Reliability Assurance Technology)}

\nolinenumbers 

\EventEditors{John Q. Open and Joan R. Access}
\EventNoEds{2}
\EventLongTitle{42nd Conference on Very Important Topics (CVIT 2016)}
\EventShortTitle{CVIT 2016}
\EventAcronym{CVIT}
\EventYear{2016}
\EventDate{December 24--27, 2016}
\EventLocation{Little Whinging, United Kingdom}
\EventLogo{}
\SeriesVolume{42}
\ArticleNo{23}

\begin{document}

\maketitle

\begin{abstract}
Automated Market Makers face a geometric dilemma: expanding liquidity depth to reduce execution slippage increases Liquidity Providers' exposure to toxic arbitrage, quantified as Loss-Versus-Rebalancing (LVR). We study the Hybrid Liquidity-Collateral Pool (HLCP), a stylized architecture that aims to partially decouple execution quality from active risk exposure through an $N$-scaled virtual invariant and a collateral buffer. The analysis first characterizes the geometric divergence between execution slippage and marginal-price deviation, then uses this divergence to motivate a trigger-based collateral injection rule. In a stylized duopoly model, under hyper-saturated background liquidity and non-zero volatility or collateral yield, adopting the HLCP is a Nash equilibrium and Pareto-improving relative to a standard AMM benchmark. Empirically, we examine two settings. Under a stochastic-volatility-with-jumps stress scenario, the trigger policy avoids one-shot total buffer depletion under the imposed control law and simulated shock path. Using 2025 Uniswap V2 data with zero collateral yield, the HLCP exhibits lower realized LVR and higher net LP return than the standard CPMM benchmark in the sample considered.
\end{abstract}

\section{Introduction}
\label{sec:Introduction}
The emergence of Automated Market Makers (AMMs) based on the Constant Product Market Maker (CPMM) model \cite{Uniswap_V2}, defined by the invariant $x \cdot y = k$, has revolutionized Decentralized Finance (DeFi). Despite their success, these protocols suffer from an economic vulnerability: Loss-Versus-Rebalancing (LVR), which transforms `Impermanent Loss (IL)' into a permanent depletion of capital \cite{cartea2023predictable, MEV_come_from, LVR, Modeling_LVR}. While IL is an opportunity cost incurred when asset prices diverge, it is reversible if prices mean-revert. However, because AMMs lack access to external oracles, their internal price discovery lags behind Centralized Exchanges (CEXs). This information asymmetry creates an arbitrage gap. Arbitrageurs exploit stale on-chain prices to extract appreciating assets, maintaining the $k$ invariant at the expense of the liquidity provider's (LP) wealth. Consequently, the combination of directional market risk and counterparty intermediation costs ensures the LP’s position value remains inferior to a `buy-and-hold' benchmark, even if prices revert. Mitigating LVR requires reducing liquidity exposure within the AMM. However, lower exposed capital also reduces depth, increases slippage, and can weaken fee generation. This creates a liquidity-depth trade-off: protecting LPs from adverse selection can make the pool less attractive to order flow. The design problem is therefore to improve this trade-off, rather than to eliminate it entirely.

In this paper, we study the Hybrid Liquidity-Collateral Pool (HLCP), a stylized architecture that places part of total capital in a collateral buffer while exposing only a calibrated active fraction through an $N$-scaling mechanism. The motivating observation is a saturation effect: beyond a threshold, additional liquidity yields only limited slippage improvement while continuing to increase LP exposure to LVR. The execution claim is therefore not literal identity with a full-reserve pool, but a tolerance-bounded approximation in saturated markets, developed later in Section~\ref{4.3 Protocol Economics}. To react to stress without leaving all capital continuously on the active curve, the HLCP uses an endogenous kinetic trigger based on marginal-price deviations. When the deviation becomes economically meaningful, collateral is injected in a price-ratio-preserving step. At the implementation level, this mechanism requires only transaction-local state synchronization; a full gas-cost analysis is outside the scope of the present paper.

\paragraph*{Our contributions are as follows:}
\begin{itemize}
    \item \textbf{Theory of Liquidity Saturation}: We characterize the divergence between execution slippage and marginal-price deviation, identifying a saturation region in which further physical liquidity delivers sharply diminishing execution benefits while preserving LVR exposure.

    \item \textbf{HLCP Architecture and Kinetic Trigger}: We introduce an $N$-scaled active pool together with a collateral buffer and specify a trigger-based collateral deployment rule driven by marginal-price deviations.

    \item \textbf{Stress-Path Validation under SVJ Dynamics}: We implement a stochastic-volatility-with-jumps (SVJ) \cite{bates1996jumps} stress experiment at 12-second block resolution. The experiment is used to test whether the imposed control rule remains bounded away from one-shot total buffer depletion under an explicitly adverse shock path.

    \item \textbf{Historical Backtest of Net LP Outcomes}: Using 2025 Uniswap V2 data, we evaluate net LP performance under the conservative assumption $Y_C(T)=0$. Under the stylized routing assumption of Section~\ref{4.3 Protocol Economics}, the HLCP exhibits lower realized LVR and higher net LP return than the standard CPMM benchmark on the sample path considered.
\end{itemize}

The remainder of this paper is structured as follows. Section~\ref{2. Background} reviews CPMM mechanics and formalizes LVR. Section~\ref{3. Related Work} analyzes the incentive-compatibility trilemma and limitations of existing mitigations. Section~\ref{4. The Hybrid Liquidity-Collateral Pool} introduces the HLCP architecture, the $N$-scaling mechanism, and the kinetic trigger. Section~\ref{5. Empirical Validation} presents the two empirical evaluations: micro-resilience under SVJ stress and long-horizon historical profitability. Section~\ref{6. Conclusion} concludes the paper.


\section{Background and Related Dynamics}
\label{2. Background}
This section establishes the theoretical foundation of structural vulnerabilities in Decentralized Exchanges (DEXs). Section \ref{2.1 CPMM} reviews CPMM mechanics and liquidity depth. Section \ref{2.2 IL and LVR} explains the limitations of Impermanent Loss (IL) and formalizes LVR as the path-dependent cost of adverse selection, framing the mathematical boundaries our proposed architecture seeks to overcome.

\subsection{Constant Product Market Maker}
\label{2.1 CPMM}
An AMM algorithm minimizes slippage (price impact) by increasing liquidity depth. In the Constant Product Formula, the reserves of two assets, $x$ and $y$, maintain a constant product $k$: $x \cdot y = k$. Following Uniswap conventions, liquidity is measured as $L = \sqrt{k}$, and any trade ($\Delta x, \Delta y$) must preserve the invariant $L^2$. However, this framework possesses an architectural limitation: algorithmic depth ($L$) is bound to physical reserves ($x, y$). Consequently, compressing execution slippage requires an expansion of physical capital. This creates capital inefficiency, mandating that $100\%$ of the LP's assets remain exposed to the active trading function and its adverse selection costs to maintain competitive execution pricing.

\subsection{Impermanent Loss and Loss-Versus-Rebalancing}
\label{2.2 IL and LVR}
Impermanent Loss (IL) traditionally served as the primary risk metric for LPs. However, IL conflates directional market risk with execution costs; because it is path-independent, it fails to capture the continuous drain of providing liquidity. If an asset's price fluctuates but returns to its initial rate, IL is zero, ignoring the value extracted by arbitrageurs during the volatility. Because CPMMs are passive and rely on stale on-chain pricing, they suffer from information asymmetry against arbitrageurs trading on external markets. Loss-Versus-Rebalancing (LVR) quantifies this adverse selection cost \cite{LVR}. LVR compares the LP's position to a reference portfolio holding identical asset weights, which rebalances via an external, frictionless market at the marginal price, rather than against arbitrage flow at stale algorithmic quotes.

To formalize this penalty, AMM dynamics are analyzed within a continuous-time Black-Scholes framework \cite{black1973pricing}. The LP's position is equivalent to a short position in exotic options, where LVR acts as the funding fee paid to arbitrageurs \cite{LVR}. Over a time horizon $t$, accumulated LVR is the path-dependent integral of instantaneous price variance multiplied by the marginal pool value:
\begin{equation}
\label{eq:LVR}
\mathrm{LVR}_t = \int_0^t \frac{\sigma_s^2}{8} V(P_s) ds
\end{equation}

\noindent
This equation dictates that LVR relies on the instantaneous stochastic volatility trajectory ($\sigma_s$) and the physical capital exposed to the marginal price path ($V(P_s)$). Consequently, in standard CPMMs where $100\%$ of capital must be exposed to maintain depth, higher market turbulence guarantees a continuous drain on LP profitability.


\section{Related Work}
\label{3. Related Work}
In this section, we categorize previous works into three primary domains: first, the formalization of LVR as the definitive metric for adverse selection; second, the development of incentive-compatible mechanism designs and their inherent trade-offs with market efficiency; and finally, structural and derivative-hybrid models that attempt to internalize risk through off-chain auctions or asymmetric pricing functions. By analyzing the limitations of these existing frameworks, we highlight the unique research gap that an endogenous structural approach must address.

\subsection{Formalizing Adverse Selection: The LVR Framework}
The emergence of LVR as a risk metric has redefined the understanding of LP profitability. \cite{LVR} provided the foundational definition of LVR, establishing it as the opportunity cost of providing liquidity against informed arbitrageurs compared to a frictionless rebalancing portfolio. This framework mathematically proves that LVR is path-dependent and grows with realized volatility ($\frac{\sigma^2_s}{8}$), a structural drain that IL fails to capture.

Building upon this foundation, \cite{Modeling_LVR} further formalized the temporal nature of this loss by modeling AMM positions as continuous-installment options. By characterizing LVR as a continuous ``funding fee'' paid by LPs to maintain their short-volatility exposure, their work bridges AMM dynamics with classical derivative pricing, specifically equating LVR to the continuous theta decay of the LP's position. Beyond theoretical modeling, the empirical impact of LVR was solidified by recent studies demonstrating that CEX-DEX arbitrage constitutes the vast majority of extracted MEV on Ethereum, compounding the foundational MEV challenges first identified in classic literature \cite{adams2025amm, daian2020flash}. Together, these works establish that LVR is not merely an abstract cost but a deterministic depletion of capital driven by information asymmetry, execution latency, and the inherent optionality of the AMM pricing curve.

\subsection{Mechanism-Level Mitigation and the Incentive Compatibility Trilemma}
While concentrated liquidity models \cite{adams2021uniswap} vastly improved capital efficiency, they inherently amplified LP exposure to toxic flow, causing recent literature to pivot toward mechanism design as a primary defense against LVR. \cite{Mechanism_Design_for_AMM} proposed an incentive-compatible (IC) AMM mechanism that utilizes batched transactions and pre-defined sequencing rules to ensure that a user’s dominant strategy is to report their true valuation, thereby stripping miners of risk-free arbitrage profits. However, as formalized in the AMM Mechanism Design Trilemma, this strong notion of IC inherently precludes market efficiency. Specifically, the batching mechanisms required for IC often ignore orders that fall outside the initial pool price, leading to unfulfilled demand even when trade execution is mathematically feasible. 

Consequently, existing mechanism designs force a binary choice: either enforce strict IC at the cost of local market efficiency, or allow efficiency while remaining vulnerable to strategic manipulation. A structural approach that preserves execution efficiency without requiring restrictive batching or sequencing rules remains a critical open challenge.

\subsection{Structural and Derivative-Hybrid Mitigations}
In response to the LVR challenge, several architectures have attempted to modify the AMM's core invariant or execution logic. CoW Swap \cite{cowswap} introduces an off-chain batch auction mechanism that matches orders internally, effectively neutralizing continuous LVR by clearing trades at a uniform price. However, this relies on third-party solvers and sacrifices synchronous on-chain composability. Alternatively, hybrid models attempt to internalize risk through derivative-like structures. GammaSwap \cite{gammaswap} treats the liquidity position as a short volatility play, allowing users to trade against IL as if it were an option, though it remains reactive and oracle-dependent. Similarly, pvpAMM \cite{shang2025pvpamm} introduces an asymmetric pricing function to resolve the internal imbalance between long and short positions in perpetual markets. While pvpAMM optimizes for the internal equilibrium of speculative players, it does not address the external imbalance, the LVR-driven drain caused by the discrepancy between the AMM and highly liquid external markets.

Crucially, these models generally assume routine market conditions and lack an endogenous mechanism to guarantee micro-resilience against sudden liquidity crunches. They often either move execution off-chain or introduce complex, high-latency derivative layers. Thus, there is a distinct need for an endogenous structural defense that proportionally reduces LVR while preserving the atomic nature of spot AMMs and ensuring resilience against acute volatility spikes.


\section{The HLCP Architecture: Re-engineering Passive Liquidity}
\label{4. The Hybrid Liquidity-Collateral Pool}
In automated market making, protocols face a fundamental geometric trap: attempting to minimize execution slippage for retail users by expanding pool depth mechanically amplifies the LVR exposure for LPs. Traditional CPMMs treat a trade as a single monolithic event on a curve. However, this perspective overlooks a critical geometric dichotomy: a single swap generates both a secant line (representing the trader's actual execution cost) and a new tangent line (representing the pool's shifted marginal price, which arbitrageurs immediately exploit). The systemic inefficiencies of modern AMMs stem directly from the attempt to force these two geometric lines into convergence using a single, static pool of capital.

In this section, we recast the CPMM problem in geometric terms and introduce the Hybrid Liquidity-Collateral Pool (HLCP) as a stylized alternative to uniform full-reserve scaling. The aim is not to claim exact secant-tangent convergence under finite capital, but to study whether separating execution depth from continuously exposed capital can improve the trade-off faced by LPs. We proceed in three steps:
\begin{enumerate}
    \item \textbf{The Geometric Limits of Passive Liquidity (Section~\ref{4.1 limits of passive liquidity}):} We formalize the divergence between execution slippage ($S$) and marginal-price deviation ($\Delta p$), and identify a saturation region in which additional physical liquidity yields limited execution improvement while maintaining exposure to toxic arbitrage.

    \item \textbf{The $N$-Scaling Architecture and Kinetic Trigger (Section~\ref{4.2 HLCP}):} We present the HLCP decomposition into active reserves and a collateral buffer, and define a trigger-based collateral injection rule that expands active depth along a price-ratio-preserving direction when marginal-price deviations become economically meaningful.

    \item \textbf{Protocol Economics and Dynamic Execution Equivalence (Section~\ref{4.3 Protocol Economics}):} We study a tolerance-bounded execution argument together with a stylized duopoly game. Under the stated saturation, routing, volatility, and collateral-yield assumptions, the HLCP action emerges as a Nash equilibrium of that model.
\end{enumerate}

\subsection{Limits of Passive Liquidity and Marginal Price Dynamics}
\label{4.1 limits of passive liquidity}
While CPMMs offer a robust mechanism for on-chain price discovery, the relationship between liquidity depth ($L$) and execution quality is inherently non-linear. As $L$ surpasses a critical saturation threshold, the marginal utility of further capital injection for reducing slippage approaches an asymptotic limit. Beyond this point, excess passive liquidity contributes negligibly to trade execution while exponentially increasing the pool's exposure to toxic arbitrage (i.e., LVR). This subsection models the diminishing returns of slippage reduction to establish the structural necessity for capital decoupling. By defining the explicit geometric divergence between execution slippage and marginal price deviation, we provide the strict mathematical prerequisite for the Kinetic Trigger mechanism introduced in subsequent sections.

Consistent with the constant product invariant $x \cdot y = L^2$, we define $L$ as the pool's liquidity depth. To rigorously quantify the structural inefficiencies of passive capital, it is essential to mathematically separate the theoretical marginal price from the effective execution price obtained during a discrete trade.

\begin{definition}[Marginal and Effective Price]
\label{def:marginal_effective}
Let $x$ and $y$ denote the pool reserve balances. The marginal price $P$ represents the tangent line to the CPMM curve at the current state, defined as $P = \frac{y}{x}$. For a discrete trade volume $\Delta x$, the effective price $P_{eff}$ represents the secant line intersecting the pre-trade and post-trade states, defined as:
\begin{equation}
P_{eff} = \frac{\Delta y}{\Delta x}    
\end{equation}
Consequently, the absolute geometric divergence between the tangent line and the secant line directly quantifies the execution slippage ($S$). Mathematically, this slippage is the strictly positive penalty imposed by the curve's convexity, expressed as $S \propto |P - P_{eff}|$.

\end{definition}

\begin{lemma}[Asymptotic Convergence of Effective Price]
\label{lemma:convergence}
For any trade size $\Delta x > 0$, the effective price $P_{eff}$ (secant line) asymptotically converges to the marginal price $P$ (tangent line) as the pool liquidity $L$ approaches infinity.
\end{lemma}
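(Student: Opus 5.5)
We prove Lemma~\ref{lemma:convergence} by direct computation on the constant product curve. First we fix what ``$L\to\infty$'' means. The natural reading is that liquidity is scaled while the marginal price is held fixed. So we parametrize the pre-trade state as $x = L/\sqrt{P}$ and $y = L\sqrt{P}$. This satisfies $x\cdot y = L^2$ and $y/x = P$ from Definition~\ref{def:marginal_effective}. We then fix the trade size $\Delta x>0$ (a trader depositing $\Delta x$ of asset $x$ and receiving $\Delta y$ of asset $y$) and let $L$ vary.

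The trade preserves the invariant: $(x+\Delta x)(y-\Delta y) = L^2$. Solving gives
\begin{equation*}
\Delta y = y - \frac{L^2}{x+\Delta x} = \frac{y\,\Delta x}{x+\Delta x},
\end{equation*}
so the secant slope is
\begin{equation*}
P_{eff} = \frac{\Delta y}{\Delta x} = \frac{y}{x+\Delta x} = P\cdot\frac{x}{x+\Delta x} = \frac{P}{1+\Delta x\sqrt{P}/L}.
\end{equation*}
From this closed form, $|P-P_{eff}| = P\cdot\frac{\Delta x\sqrt{P}/L}{1+\Delta x\sqrt{P}/L}$. This is strictly positive for every finite $L$, matching the convexity remark in Definition~\ref{def:marginal_effective}. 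It is also bounded above by $P^{3/2}\Delta x/L$, which tends to $0$ as $L\to\infty$. This gives the convergence claimed in the lemma, together with an explicit $O(1/L)$ rate, which is useful later for defining a saturation threshold. For a trade in the opposite direction (the trader removes $\Delta x$ of asset $x$, with $\Delta x < x$ for feasibility), the same computation gives $P_{eff} = P/(1-\Delta x\sqrt{P}/L)$. This case requires $L>\Delta x\sqrt{P}$, which holds for all large $L$, and the same limit follows.

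The only genuine subtlety is the modelling choice in the first step. If $x$ alone were sent to infinity with $y$ fixed, then $P$ itself would drift to zero and the statement would be vacuous or misleading. I would therefore state explicitly that $P$ is held constant along the limit, i.e.\ reserves scale proportionally with $L$. The algebra itself is routine. I would close by noting that the relative slippage $S/P$ is of order $\Delta x\sqrt{P}/L$, so each further increment in $L$ buys only a vanishing reduction in slippage. This is the diminishing-returns observation the subsection relies on.
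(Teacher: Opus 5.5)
Your proposal is correct and follows the paper's proof: parametrize $x = L/\sqrt{P}$, $y = L\sqrt{P}$, solve $(x+\Delta x)(y-\Delta y)=L^2$ for $\Delta y$, obtain $P_{eff} = P/(1+\Delta x\sqrt{P}/L)$, and let $L\to\infty$. Your additions are sound refinements that the paper leaves implicit: the explicit bound $|P-P_{eff}|\le P^{3/2}\Delta x/L$, the reverse-direction trade, and the explicit statement that $P$ is held fixed along the limit.
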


\begin{proof}
Based on Definition \ref{def:marginal_effective} and the invariant $x \cdot y = L^2$, the asset balances are parameterized as $x = L/\sqrt{P}$ and $y = L\sqrt{P}$. When a trade $\Delta x$ is executed, the pool state shifts such that $(x + \Delta x)(y - \Delta y) = L^2$. Isolating the output yields $\Delta y = y - \frac{L^2}{x + \Delta x}$. By substituting these parameterized expressions, we derive the effective price $P_{eff}$ and take its limit as liquidity approaches infinity:
\begin{equation}
\label{eq:P_eff asymptotic}
P_{eff} = \frac{P}{1 + \frac{\Delta x \sqrt{P}}{L}} \quad \implies \quad \lim_{L \to \infty} P_{eff} = P
\end{equation}
This mathematically confirms that the secant line representing the effective price aligns perfectly with the tangent line of the marginal price under infinite liquidity, thereby nullifying the geometric price impact.
\end{proof}

\begin{definition}[Slippage]
\label{def:slippage}
Building upon this geometric divergence, let the slippage $S$ be formally defined as the relative price impact incurred by a trade of size $\Delta x$. It is computed by normalizing the absolute deviation of the effective execution price $P_{eff}$ against the theoretical marginal price $P$. By algebraically substituting the price ratio ${P_{eff}}/{P}$ derived in (\ref{eq:P_eff asymptotic}) and expressing it in terms of the pre-trade balance $x$, we obtain:
\begin{equation}
S(L) = \frac{P - P_{eff}}{P} = 1 - \frac{P_{eff}}{P} = 1 - \frac{x}{x + \Delta x} = \frac{\Delta x}{x + \Delta x}    
\end{equation}

\end{definition}

\begin{lemma}[Convexity and Diminishing Returns of Liquidity]
\label{lemma:convexity}
For a fixed trade size $\Delta x > 0$ and marginal price $P$, the slippage function $S(L)$ is strictly convex with respect to the liquidity depth $L$. This establishes that the marginal utility of expanding physical liquidity strictly diminishes and approaches zero asymptotically.
\end{lemma}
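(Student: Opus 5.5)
We express $S$ directly as a function of $L$ with $\Delta x$ and $P$ held fixed, then differentiate twice. Using the parameterization from the proof of Lemma~\ref{lemma:convergence}, $x = L/\sqrt{P}$. Substituting into Definition~\ref{def:slippage} gives
\begin{equation*}
S(L) = \frac{\Delta x}{L/\sqrt{P} + \Delta x} = \frac{c}{L + c}, \qquad c := \Delta x\sqrt{P} > 0 .
\end{equation*}
This is equivalent to $1 - P_{eff}/P$ with $P_{eff}$ taken from (\ref{eq:P_eff asymptotic}). The reduction to a single constant $c$ is the key simplification, because it turns the claim into an elementary statement about the function $c/(L+c)$ on $L>0$.

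The next step is to compute the derivatives:
\begin{equation*}
S'(L) = -\frac{c}{(L+c)^2} < 0, \qquad S''(L) = \frac{2c}{(L+c)^3} > 0
\end{equation*}
for all $L > 0$. Strict positivity of $S''$ on the open interval $(0,\infty)$ gives strict convexity, by the standard second-derivative criterion. Since $S'<0$, more liquidity always reduces slippage.

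To make "marginal utility" precise, I would define it as the slippage reduction per unit of added depth, $-S'(L) = c/(L+c)^2$. This quantity is positive and strictly decreasing in $L$, which is just $S''>0$ restated. It also tends to $0$ as $L\to\infty$. I would also note $S(L)\to 0$, consistent with Lemma~\ref{lemma:convergence}.

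The main obstacle is not the calculus but the interpretation. We must make clear that $P$ and $\Delta x$ are held fixed while $L$ varies, so that $x$ scales linearly in $L$. We must also identify "marginal utility" with $-S'(L)$, so the phrase "diminishes and approaches zero" is a rigorous statement rather than a heuristic one. A brief remark would record that the decay is of order $O(L^{-2})$, which quantifies the saturation region used later.
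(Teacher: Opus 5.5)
Your proposal is correct and follows essentially the same route as the paper: substitute $x = L/\sqrt{P}$ to obtain $S(L) = c/(L+c)$ with $c = \Delta x\sqrt{P}$, then read off $S'(L)<0$ and $S''(L)>0$. Your explicit identification of marginal utility with $-S'(L) = c/(L+c)^2$, shown to be strictly decreasing and tending to $0$, is a slightly more careful rendering of the paper's informal ``decays quadratically'' remark.
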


\begin{proof}
Based on the CPMM $x = \frac{L}{\sqrt{P}}$, let $K = \Delta x \sqrt{P}$ represent a strictly positive, trade-specific constant. Substituting $x$ into Definition \ref{def:slippage} yields the generalized slippage function and its derivatives: 
\begin{equation*}
S(L) = \frac{K}{L + K}, \quad  \frac{dS}{dL} = -\frac{K}{(L + K)^2} < 0, \quad \frac{d^2S}{dL^2} = \frac{2K}{(L + K)^3} > 0
\end{equation*}

\noindent
Because $K, L > 0$, the function is trivially decreasing and strictly convex. This property mathematically enforces the law of diminishing returns: the marginal reduction in slippage obtained per unit of added capital decays quadratically, $\propto 1/(L+K)^2$, with respect to liquidity depth. Thus, beyond a specific saturation point, deploying additional physical capital yields negligible execution improvements.
\end{proof}

To quantify this state of capital inefficiency, we formally define the practical saturation threshold ($L_{sat}$) as the liquidity depth at which the marginal reduction in slippage falls below a negligible constant $\epsilon$. Mathematically, this boundary is established when $\left| \frac{dS}{dL} \right| \le \epsilon$:
\begin{equation*}
\frac{K}{(L + K)^2} \le \epsilon \implies L_{sat} \ge \sqrt{\frac{K}{\epsilon}} - K
\end{equation*}

\noindent
In practice, the theoretical threshold $\epsilon$ is not an arbitrary constant, but is structurally dictated by the market microstructure of external venues. Specifically, $\epsilon$ represents the fundamental frictional bounds of CEXs, comprising the minimum tick size and baseline trading fees (e.g., 1-5 bps). If the marginal reduction in DEX slippage falls below this exogenous tick resolution, the price improvement becomes economically invisible to both retail traders and arbitrageurs. Consequently, for major pairs like ETH/USDC, standard order flow ($K$) reaches this frictional saturation point ($\approx \epsilon$) long before utilizing the pool's full depth. Any physical liquidity strictly greater than this threshold ($L > L_{sat}$) is mathematically and economically ``lazy.'' It provides negligible additional execution benefit at the chosen $(\epsilon$)-threshold in trade execution while continuously subjecting LPs to uncompensated, structural LVR exposure.

Until now, our analysis has focused on the execution slippage $S(L)$, the geometric penalty (secant line) experienced by an honest retail trader. However, toxic arbitrageurs extracting LVR do not interact with the secant line; they attack the pool's post-trade marginal price (tangent line). To mathematically prove why simply increasing passive liquidity fails to neutralize LVR, we must define the geometric divergence between the trader's execution price and the arbitrageur's target price.

\begin{definition}[Marginal Price Deviation]
\label{def:price_deviation}
While the slippage $S(L)$ bounds the secant line from the trader’s perspective, the marginal price deviation $\Delta p$ measures the absolute state shift of the tangent line from the pool’s perspective. It is defined as the relative difference between the initial spot price $P_0$ and the new marginal spot price $P_{new}$ at the exact moment the swap concludes:
\begin{equation}
\Delta p =  \frac{P_0 - P_{new}}{P_0}    
\end{equation}

\end{definition}

\begin{lemma}[Divergence of Execution and Market Pricing]
\label{lemma:divergence}
For any positive trade volume $\Delta x > 0$, the marginal price deviation strictly exceeds the execution slippage ($\Delta p > S(L)$). This geometric divergence demonstrates that stabilizing $S(L)$ at $L_{sat}$ is entirely insufficient to neutralize LVR exposure.
\end{lemma}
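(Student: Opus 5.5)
We compute both quantities explicitly in terms of the pre-trade reserve $x$ and the trade size $\Delta x$, using the same parameterization as in Lemma~\ref{lemma:convergence}, and then compare them directly.

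\textbf{Step 1: closed form for $\Delta p$.} Since $y = L^2/x$, the marginal price is $P = y/x = L^2/x^2$. The trade leaves the invariant $x\cdot y = L^2$ unchanged and moves the $x$-reserve to $x+\Delta x$. Hence the post-trade marginal price is
\begin{equation*}
P_{new} = \frac{L^2}{(x+\Delta x)^2}.
\end{equation*}
Substituting into Definition~\ref{def:price_deviation} with $P_0 = L^2/x^2$ gives
\begin{equation*}
\Delta p = 1 - \frac{x^2}{(x+\Delta x)^2} = \frac{\Delta x\,(2x+\Delta x)}{(x+\Delta x)^2}.
\end{equation*}

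\textbf{Step 2: compare with $S(L)$.} Definition~\ref{def:slippage} gives $S(L) = \Delta x/(x+\Delta x)$. Writing $r = x/(x+\Delta x)\in(0,1)$, we have $S = 1-r$ and $\Delta p = 1-r^2 = (1-r)(1+r)$. Therefore
\begin{equation*}
\Delta p - S = (1-r)\,r = \frac{x\,\Delta x}{(x+\Delta x)^2} > 0,
\end{equation*}
which holds strictly because $x>0$ and $\Delta x>0$. In terms of liquidity, with $K=\Delta x\sqrt{P}$ as in Lemma~\ref{lemma:convexity}, this reads $\Delta p - S = LK/(L+K)^2$. Equivalently, $\Delta p = S\,(1+r)$, so $\Delta p$ lies strictly between $S$ and $2S$. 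Geometrically, the tangent moves roughly twice as far as the secant.

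\textbf{Step 3: the interpretive claim.} The second sentence of the lemma says that stabilizing $S$ at $L_{sat}$ does not neutralize LVR exposure. We make this precise in two parts.
\begin{itemize}
\item At $L=L_{sat}$, we have $\Delta p \approx 2S$ whenever $\Delta x \ll x$. So even when $S$ is at the frictional level, the tangent displacement that arbitrageurs exploit is about twice that level.
\item The exposed value $V(P_s)$ in (\ref{eq:LVR}) grows linearly in $L$, whereas $S$ has saturated. Additional capital therefore continues to raise LVR without meaningfully improving execution.
\end{itemize}

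\textbf{Main obstacle.} The inequality itself is elementary. The real difficulty is that the claim ``entirely insufficient to neutralize LVR'' is not a formal statement. The plan is to present it as the consequence just described: $\Delta p$ is bounded below by $S$ and is asymptotically $2S$, and LVR scales with $L$ rather than with $S$. No stronger result should be claimed.
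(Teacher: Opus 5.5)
Your proof is correct and follows essentially the paper's route: the paper also computes $P_{new}=L^2/(x+\Delta x)^2$ and factors $\Delta p = S(L)\,(2-S(L))$, which is exactly your identity $\Delta p = S(1+r)$ since $1+r = 2-S$, so working in $x,\Delta x$ rather than $L,K$ is purely cosmetic. What you add beyond the paper's proof is the explicit gap $\Delta p - S = LK/(L+K)^2$, the two-sided bound $S<\Delta p<2S$, and a more careful reading of the non-formal second sentence using the fact that the exposed value $V(P_s)=2L\sqrt{P_s}$ is linear in $L$.
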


\begin{proof}
Let $K = \Delta x \sqrt{P_0}$. The execution slippage is established as $S(L) = \frac{K}{L + K}$. Upon executing the trade, the new reserve balances become $x_{new} = x + \Delta x$ and, based on the invariant, $y_{new} = \frac{L^2}{x + \Delta x}$. The new marginal price is trivially the ratio of these new reserves ($P_{new} = \frac{y_{new}}{x_{new}}$), which simplifies to $P_{new} = \frac{L^2}{(x + \Delta x)^2}$. By substituting $x = \frac{L}{\sqrt{P_0}}$, the price deviation computes to:
\begin{equation}
\Delta p = 1 - \frac{P_{new}}{P_0} = 1 - \frac{L^2}{(L + \Delta x \sqrt{P_0})^2} = \frac{(L+K)^2 - L^2}{(L+K)^2} = \frac{2LK + K^2}{(L+K)^2}    
\end{equation}

\noindent
By factoring this relation against $S(L)$, we derive an exact quadratic correspondence:
\begin{equation}
\Delta p = \left( \frac{K}{L+K} \right) \times \left( \frac{2L+K}{L+K} \right) = S(L) \cdot (2 - S(L))    
\end{equation}

\noindent
Since $S(L) > 0$ for any non-zero trade, it strictly follows that $(2 - S(L)) > 1$. Therefore, $\Delta p > S(L)$ always holds true. Specifically, for standard retail trades where $S(L)$ is small and approaches $0$, the marginal price deviation approaches exactly twice the magnitude of the execution slippage: $\Delta p \approx 2 S(L)$
\end{proof}

\begin{example}[The $2\times$ Arbitrage Exploit]
To illustrate the economic severity of this geometric divergence, consider an external market shock where the true price of an asset on a CEX increases by $2\%$. To extract maximum risk-free profit, an arbitrageur must push the DEX's marginal price to match the CEX, targeting a marginal price deviation of $\Delta p = 2\%$. However, strictly governed by the $\Delta p \approx 2 S(L)$ multiplier, the arbitrageur incurs an execution slippage of only $S(L) \approx 1\%$ to move the tangent line by $2\%$. Consequently, the arbitrageur buys the underpriced asset on the DEX at a $1\%$ geometric penalty (secant cost) and immediately liquidates it on the CEX at a $2\%$ premium (tangent target), deterministically pocketing a $1\%$ risk-free margin (LVR) at the direct expense of LPs.
\end{example}

\begin{remark}[The Core Dilemma and Architectural Prerequisite]
This structural exploit mathematically proves why passive liquidity scaling is a definitive failure to neutralize LVR. If a protocol attempts to defend against this exploit by injecting massive physical liquidity ($L \to \infty$) to suppress the execution slippage $S(L)$ down to $0.01\%$, the marginal price deviation $\Delta p$ inherently shrinks to $0.02\%$. The $1:2$ ratio remains immutable. The arbitrageur will simply pay $0.01\%$ in slippage to extract a $0.02\%$ price discrepancy, continually draining the pool. Because this vulnerability is hardcoded into the curve's convexity, attempting to protect the pool by holding $L > L_{sat}$ is structurally futile. Therefore, to effectively intercept LVR, the protocol must abandon passive scaling. It requires a bifurcated architecture that defends the tangent line ($\Delta p$) entirely independently of the secant line ($S(L)$). This necessitates the $N$-scaled Hybrid Liquidity-Collateral Pool (HLCP) introduced in the following sections.
\end{remark}


\subsection{The Hybrid Pool Architecture: Inner-Product State Binding}
\label{4.2 HLCP}
In this section, we formalize the Hybrid Liquidity-Collateral Pool (HLCP), an architecture designed to functionally decouple capital provision from market-making exposure. While Section \ref{4.1 limits of passive liquidity} demonstrated that marginal utility diminishes as liquidity saturates, the HLCP internalizes this ``laziness'' by bifurcating the liquidity vector into an active capacity for the constant-product curve and a low-risk yield collateral buffer.

\subsubsection{The $N$-Scaling Mechanism and Virtual Invariant}
The primary mechanism of the HLCP is the $N$-Scaling of active reserves to maintain price impact parity with a much deeper pool. Let the physical reserves in the active pool be $x_a$ and $y_a$. To provide an effective depth of $L_{total}$ while utilizing only a fraction of the capital, the AMM operates on a virtual constant-product curve:
\begin{equation}
\left( \frac{x_a}{N} \right) \cdot \left( \frac{y_a}{N} \right) = L_{total}^2    
\end{equation}

\noindent
where $N \in (0,1]$ represents the capital exposure ratio. By setting $N < 1$, the system deploys $L_{active} = N \cdot L_{total}$ into the active market-making capacity, while diverting the residual $C = (1-N)\cdot L_{total}$ into the collateral buffer.

\paragraph{Physical and claimed slippage}
Fix a trade size $\Delta x > 0$ and marginal price $P$, and recall $K=\Delta x\sqrt{P}$ from Lemma~\ref{lemma:convexity}. We distinguish between the benchmark slippage of the full-depth reference pool and the realized slippage of the physically active pool:
\begin{equation*}
\label{eq:claimed-physical-slippage}
S_{\mathrm{claimed}} := S(L_{total}) = \frac{K}{L_{total}+K},
\qquad
S_{\mathrm{physical}} := S(L_{active}) = S(NL_{total}) = \frac{K}{NL_{total}+K}
\end{equation*}

\noindent
Because $N \le 1$, one has $S_{\mathrm{physical}} \ge S_{\mathrm{claimed}}$, with equality only in the limiting cases $N=1$ or $L_{total}\to\infty$. Their exact gap is
\begin{equation}
\label{eq:router-gap}
\Delta S := S_{\mathrm{physical}} - S_{\mathrm{claimed}}
= \frac{K(1-N)L_{total}}{(NL_{total}+K)(L_{total}+K)} > 0
\end{equation}

\noindent
Thus the HLCP does \emph{not} claim literal equality between the physical $N$-scaled curve and the full-reserve benchmark. Instead, for a router tolerance parameter $\varepsilon_{\mathrm{router}} > 0$, we say that the protocol attains \emph{router-acceptance equivalence} at trade size $\Delta x$ whenever
\begin{equation}
\Delta S \le \varepsilon_{\mathrm{router}}.
\label{eq:router-tolerance}
\end{equation}

\noindent
Since $(NL_{total}+K)(L_{total}+K) \ge N L_{total}^2$, a sufficient condition for \eqref{eq:router-tolerance} is
\begin{equation}
L_{total} \ge \frac{K(1-N)}{N\varepsilon_{\mathrm{router}}}.
\label{eq:router-tolerance-sufficient}
\end{equation}
Hence, in hyper-saturated markets, the physical slippage gap can be made arbitrarily small at the router tolerance scale, even though it is never exactly zero for finite $L_{total}$.

\paragraph{Execution objects used below}
To avoid ontological ambiguity, we explicitly distinguish three objects throughout the sequel: (i) the executable active depth $L_{active}=NL_{total}$, on which swaps are physically settled before trigger activation; (ii) the benchmark depth $L_{total}$, used only as a reference target for slippage comparison; and (iii) the post-trigger executable depth $L_{new}=L_{active}+\Delta C$, on which swaps are physically settled after collateral deployment.

\subsubsection{Deterministic Binding via State Vector}
Let $L_{active}$ be the baseline liquidity invariant of the active curve ($L_{active} = \sqrt{x_a \cdot y_a}$), and $C$ be the liquidity-equivalent capacity of the isolated collateral buffer mapped to the invariant space ($L = \sqrt{x \cdot y}$). We define the pool's state allocation vector $\mathbf{\Lambda}$, which partitions the liquidity capacities:
\begin{equation}
\mathbf{\Lambda} = \begin{bmatrix} L_{active} \\ C \end{bmatrix}    
\end{equation}

\noindent
The dynamic equilibrium of this allocation is governed by a weight vector $\mathbf{W}(\Delta p)$, which continuously monitors the pool's marginal price deviation, defined as $\Delta p$. We formalize this functional binding as:
\begin{equation}
\mathbf{W}(\Delta p) = \begin{bmatrix} \frac{1}{N} \\ \alpha \cdot \phi(\Delta p) \end{bmatrix}    
\end{equation}

\noindent
The first component, the Leverage Factor ($\frac{1}{N}$), scales the active reserves to establish the virtual invariant's baseline, providing a standardized geometric reference for the kinetic trigger. The second component, the kinetic trigger ($\alpha \cdot \phi(\Delta p)$), dictates the dynamic collateral injection rate. Here, $\alpha > 0$ serves as a strictly positive sensitivity multiplier governing the aggressiveness of the collateral response, and $\phi:\mathbb{R}\to\mathbb{R}_{\ge 0}$ is the deterministic activation function $\phi(z) := \max\{0, |z|-\tau\}$ anchored to a baseline tolerance $\tau$, bounded by the pool's base swap fee ($f$). Because arbitrageur profitability (LVR) is determined by the post-trade marginal price rather than the execution price, $\Delta p$ must serve as the sole algorithmic sensor. Since micro-deviations ($|\Delta p| \le f$) render arbitrage unprofitable due to fee friction, the Kinetic Trigger is explicitly designed to remain dormant against benign noise and activate only when the price deviation presents a viable LVR opportunity ($\tau \ge f$).

The deterministic binding between the active capacity and the collateral buffer is strictly governed by the inner product of the weight and state allocation vectors:
\begin{equation}
\langle \mathbf{W}(\Delta p), \mathbf{\Lambda} \rangle = \left( \frac{1}{N} \cdot L_{active} \right) + \Big( \alpha \cdot \phi(\Delta p) \cdot C \Big) = L_{eff}(\Delta p)
\end{equation}

\noindent
When the deviation exceeds the tolerance threshold ($|\Delta p| > \tau$), the kinetic trigger activates. Geometrically, the protocol injects collateral in the spot ratio associated with $P_{new}$, so the operating point moves along the corresponding price ray. This preserves the instantaneous reserve ratio at the moment of activation while increasing executable depth from $L_{active}$ to $L_{new}=L_{active}+\Delta C$.

Economically, the relevant claim is comparative rather than absolute. The post-trigger curve is physically deeper and locally flatter than the pre-trigger active curve, which reduces the incremental adverse-selection cost of subsequent marginal flow. This does not imply literal identity between the physical post-trigger curve and a benchmark full-depth curve. Rather, under the tolerance-bounded routing condition developed in Section~\ref{4.3 Protocol Economics}, it provides an operational approximation that can remain competitively executable in saturated markets.

\subsubsection{Dynamic Injection Scalar and Asymptotic Buffer Defense}
If a large adverse order moves the pool price, the protocol may shift physical capital $\Delta C$ from the collateral buffer into the active pool. The virtual accounting target $L_{eff}$ may exceed the original benchmark depth, but the physically executable post-trigger depth remains $L_{new}=L_{active}+\Delta C \leq L_{total}$.

\begin{lemma}[Dynamic Injection Scalar]
\label{lemma:injection_scalar}
Under the protocol-enforced control policy below, the trigger-induced collateral deployment $\Delta C$ is a bounded rational function of the shock variable $\phi(\Delta p)$. In particular, for every finite shock, the policy implies $\Delta C < C$, so one trigger response cannot exhaust the entire buffer in a single step.
\end{lemma}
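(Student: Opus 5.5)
Since the statement refers to a control policy that is fixed by the protocol, I will first write that policy down explicitly and then derive both claims from its algebraic form. The natural choice, consistent with the inner-product binding $\langle \mathbf{W}(\Delta p), \mathbf{\Lambda}\rangle$ and the activation function $\phi(z)=\max\{0,|z|-\tau\}$, is a saturating injection law
\begin{equation*}
\Delta C(\Delta p) \;=\; C\cdot \frac{\alpha\,\phi(\Delta p)}{1+\alpha\,\phi(\Delta p)},
\end{equation*}
with $C=(1-N)L_{total}$ the buffer capacity and $\alpha>0$ the sensitivity multiplier. If the paper's policy is stated differently (for example with an extra normalisation by $L_{active}$), I expect it to reduce to this Michaelis--Menten form after substitution, and the argument below is unchanged. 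The proof then has three steps.

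\textbf{Step 1: rationality and dormancy.} Set $u:=\alpha\phi(\Delta p)\ge 0$. Then $\Delta C = C\,g(u)$ with $g(u)=u/(1+u)$, which is a rational function of $\phi(\Delta p)$. Since $\phi\ge 0$ and $\alpha>0$, the denominator satisfies $1+u\ge 1$, so $g$ is well defined and continuous. In the dormant regime $|\Delta p|\le\tau$ we have $\phi=0$, hence $\Delta C=0$. This matches the fee-friction design requirement $\tau\ge f$.

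\textbf{Step 2: monotonicity and boundedness.} Compute $g'(u)=1/(1+u)^2>0$, so the injection is strictly increasing in the shock. Also $g(u)=1-1/(1+u)\in[0,1)$ for every finite $u\ge 0$, which gives $0\le \Delta C< C$ for every finite $\phi(\Delta p)$. Finiteness of the shock is exactly what keeps $1/(1+u)>0$. The supremum $C$ is approached only as $\phi\to\infty$ and is never attained, which is the ``no one-shot exhaustion'' claim. As a corollary,
\begin{equation*}
L_{new}=L_{active}+\Delta C< NL_{total}+(1-N)L_{total}=L_{total},
\end{equation*}
which recovers the earlier stated constraint $L_{new}\le L_{total}$.

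\textbf{Main obstacle.} The difficulty is not the algebra but justifying that this particular law is the ``protocol-enforced'' one. Any unbounded linear rule, such as $\Delta C=\alpha\phi C$, would violate the lemma for large shocks. I would therefore argue the point in two ways. First, the policy must cap deployment strictly below the buffer for every finite input. Second, the rational form is the minimal monotone map from $\mathbb{R}_{\ge0}$ into $[0,C)$ that agrees with the linear trigger $\alpha\phi C$ to first order near $\phi=0$, since $g(u)=u+O(u^2)$. This keeps the small-shock response consistent with the weight vector $\mathbf{W}(\Delta p)$.

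A further subtlety concerns repeated triggers. The lemma only covers a single step. Iterated application gives a residual buffer of $C\prod_i (1+u_i)^{-1}>0$, which stays positive after any finite number of finite shocks. I would note this as a remark rather than as part of the claim.
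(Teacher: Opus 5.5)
Your argument is essentially the paper's. Both write the deployment in the saturating form $\Delta C = C\,u/(1+u)$ and conclude $\Delta C<C$ from $u/(1+u)<1$ for finite $u$, with $C$ approached only as $\phi\to\infty$.

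The one correction concerns the policy itself. The paper imposes the implicit balance rule $\Delta C/N=\alpha\,\phi(\Delta p)\,(C-\Delta C)$: the leverage-weighted injection equals the trigger-weighted residual buffer in the post-trigger inner product. This is linear in $\Delta C$ and solves to $u=N\alpha\,\phi(\Delta p)$ rather than your $u=\alpha\,\phi(\Delta p)$. Because the rule is simply imposed, your ``main obstacle'' is moot. That is just as well, since your minimality justification is false: for example, $C\,u/(1+2u)$ is pointwise smaller, increasing, $[0,C)$-valued, and has the same first-order slope.
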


\begin{proof}
Let the post-trigger state vector be $\mathbf{\Lambda}' = [L_{active} + \Delta C,\; C - \Delta C]^\top$. The post-trigger accounting target is still evaluated through the same weight vector:
\begin{equation}
\langle \mathbf{W}(\Delta p), \mathbf{\Lambda}' \rangle = \frac{1}{N}(L_{active} + \Delta C) + \alpha \cdot \phi(\Delta p) \cdot (C - \Delta C) = L'_{eff}(\Delta p).
\end{equation}

\noindent
To impose a one-step bound on deployment, the protocol adopts the control rule
\begin{equation}
\frac{\Delta C}{N}=\alpha \cdot \phi(\Delta p)\cdot (C-\Delta C).
\end{equation}

\noindent
Solving for $\Delta C$ yields $\Delta C=\frac{N \cdot \alpha \cdot \phi(\Delta p)\cdot C}{1+N \cdot \alpha \cdot \phi(\Delta p)}$. For every finite $\phi(\Delta p)$, the denominator strictly exceeds the numerator multiplier, and therefore $\Delta C<C$. The asymptotic envelope is
\begin{equation*}
\lim_{\phi \to \infty}\Delta C = \lim_{\phi \to \infty} C\left(\frac{N\alpha\phi(\Delta p)}{1+N\alpha\phi(\Delta p)}\right) = C
\end{equation*}

\noindent
Thus, the limit describes only the envelope of the policy as the shock grows without bound. Under the chosen control rule, a single trigger response remains bounded away from total depletion for every finite shock. This is a statement about the one-step policy response, not a claim of perpetual solvency under arbitrarily repeated shocks.
\end{proof}



\subsubsection{The Vectorized Homothety Projection and Price-Neutrality}
Crucially, injecting this geometric scalar $\Delta C$ must act not as a trade, but as a structural expansion of the curve. Fig. \ref{fig:homothety} illustrates this vectorized injection process, where the curve expands outward without altering the marginal price.

\begin{figure}[htbp]
    \centering
    \includegraphics[width=0.7\linewidth]{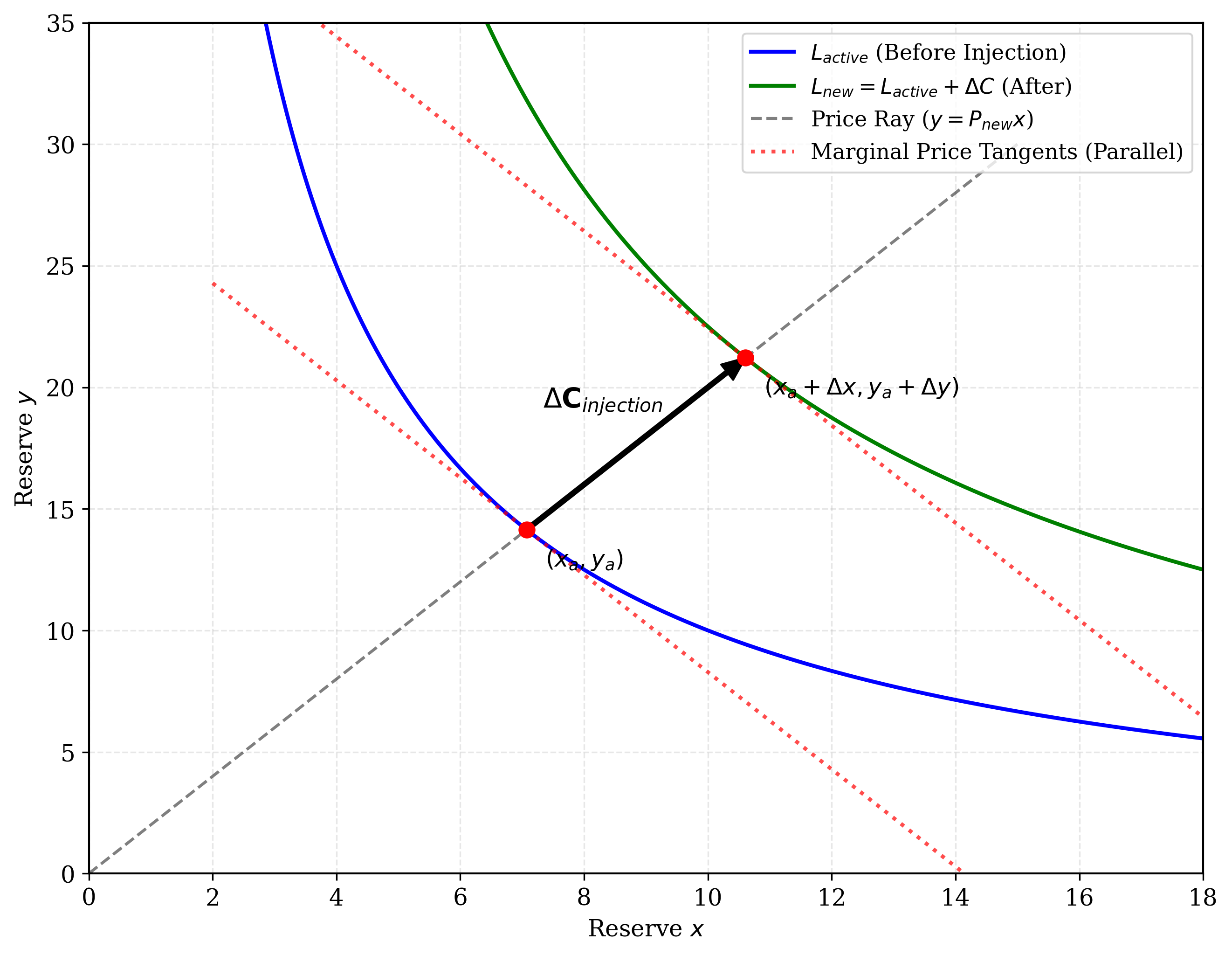}
    \caption{Schematic illustration of the vectorized homothety projection. The injection vector $\Delta \mathbf{C}_{injection}$ scales the active liquidity ($L_{active} \to L_{new}$) along the price ray, preserving the marginal tangent slope.}
    \label{fig:homothety}
\end{figure}

\begin{lemma}[Homothety Projection]
\label{lemma:homothety}
Injecting the vector $\Delta \mathbf{C}_{injection} = \frac{\Delta C}{\sqrt{P_{new}}} \begin{bmatrix} 1 & P_{new} \end{bmatrix}^T$ acts exclusively as a uniform homothety, expanding the absolute liquidity depth by exactly $\Delta C$ while strictly preserving the skewed marginal price $P_{new}$.
\end{lemma}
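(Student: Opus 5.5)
We work in the parameterization used in the proofs of Lemma~\ref{lemma:convergence} and Lemma~\ref{lemma:divergence}. Every point on a constant-product curve of depth $L$ and marginal price $P$ has coordinates $(x,y) = (L/\sqrt{P},\, L\sqrt{P})$. Equivalently, it is $L$ times the unit vector $u(P) := \frac{1}{\sqrt{P}}[1 \;\; P]^T$, which lies along the price ray $y = Px$.

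The first step fixes the post-trade active state. After the shock, and at the instant of trigger activation, the reserves are $x_a = L_{active}/\sqrt{P_{new}}$ and $y_a = L_{active}\sqrt{P_{new}}$. These satisfy $x_a y_a = L_{active}^2$ and $y_a/x_a = P_{new}$. Writing $\mathbf{r} = [x_a \;\; y_a]^T$, this reads $\mathbf{r} = L_{active}\, u(P_{new})$.

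The second step observes that the injection vector has the same form, $\Delta \mathbf{C}_{injection} = \Delta C\, u(P_{new})$, so it is collinear with $\mathbf{r}$. The post-injection reserves are therefore
\begin{equation*}
\mathbf{r}' = \mathbf{r} + \Delta\mathbf{C}_{injection} = (L_{active}+\Delta C)\, u(P_{new}) = \lambda\, \mathbf{r},
\qquad \lambda := \frac{L_{active}+\Delta C}{L_{active}} \ge 1 .
\end{equation*}
This exhibits the operation as a homothety centred at the origin with ratio $\lambda$. It is uniform because both coordinates are scaled by the same factor, so no swap (net exchange of one asset for the other) takes place.

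The third step reads off the two claims directly.

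\begin{itemize}
\item \textbf{Price preservation.} The new ratio is $y'/x' = \lambda y_a/(\lambda x_a) = P_{new}$.
\item \textbf{Depth increase.} The new invariant is $\sqrt{x'y'} = \lambda\sqrt{x_a y_a} = L_{active}+\Delta C = L_{new}$, so the depth rises by exactly $\Delta C$.
\end{itemize}

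To justify the phrase \emph{preserving the marginal tangent slope} in the geometric sense of Figure~\ref{fig:homothety}, we also compute the slope of the new curve $xy = L_{new}^2$ at $\mathbf{r}'$. Implicit differentiation gives $dy/dx = -y'/x' = -P_{new}$, which matches the slope of the old curve at $\mathbf{r}$. More generally, the homothety $(x,y)\mapsto(\lambda x,\lambda y)$ maps the curve $xy = L^2$ onto $xy = \lambda^2 L^2$ and preserves slopes of tangent lines at corresponding points. The whole curve is therefore transported as a homothety, not just the operating point.

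The main point requiring care is the bookkeeping that the quantity $\Delta C$, obtained from Lemma~\ref{lemma:injection_scalar} in liquidity-equivalent units, corresponds to physical reserves of the stated form. If $C$ is held in assets at a different ratio, a conversion is needed, and that conversion lies outside this lemma. I would state explicitly that the buffer's liquidity-equivalent capacity is mapped to invariant space via $L = \sqrt{xy}$, as in the definition of $\mathbf{\Lambda}$. Under that mapping, a capacity $\Delta C$ deployed at price $P_{new}$ is exactly $\Delta C\, u(P_{new})$, and the argument above closes.
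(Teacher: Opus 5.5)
Your proof is correct and takes essentially the same route as the paper. Both arguments place the pre-injection reserves and the injection on the same price ray, then verify directly that the reserve ratio stays $P_{new}$ while $\sqrt{xy}$ becomes $L_{active}+\Delta C$. Your factorization $\mathbf{r}'=\lambda\mathbf{r}$ packages the paper's algebra more cleanly and makes the word ``homothety'' literal. The paper, by contrast, also derives the injection vector from the two constraints $\sqrt{\Delta x\,\Delta y}=\Delta C$ and $\Delta y/\Delta x=P_{new}$, rather than taking it as given.

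One small slip: $u(P)$ is not a unit vector in the Euclidean sense, since its norm is $\sqrt{P+1/P}\ge\sqrt{2}$. It has unit depth only in the invariant sense (the product of its coordinates is $1$), and that is all your argument actually uses.
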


\begin{proof}
Given the invariant $L = \sqrt{x \cdot y}$, the geometric expansion of the curve is defined as $\Delta L = \sqrt{\Delta x \cdot \Delta y} = \Delta C$. To inject assets along the constant-price ray, the ratio of the injected reserves must exactly match the current marginal price. Thus, we enforce $\frac{\Delta y}{\Delta x} = P_{new}$, which derives $\Delta x = \frac{\Delta C}{\sqrt{P_{new}}}$ and $\Delta y = \Delta C \sqrt{P_{new}}$. The requisite injection vector, visualized as the outward translation in Fig. \ref{fig:homothety}, is formalized as:
\begin{equation}
\Delta \mathbf{C}_{injection} = \frac{\Delta C}{\sqrt{P_{new}}} \cdot \begin{bmatrix} 1 \\ P_{new} \end{bmatrix} = \begin{bmatrix} \Delta x \\ \Delta y \end{bmatrix}    
\end{equation}

\noindent
Let the pre-injection reserves be $(x_a, y_a)$ such that the marginal price is $P_{new} = \frac{y_a}{x_a}$. By injecting these assets, the new tangent price $P'$ becomes:
\begin{equation}
    P' = \frac{y_a + \Delta y}{x_a + \Delta x} = \frac{P_{new} x_a + \Delta C \sqrt{P_{new}}}{x_a + \frac{\Delta C}{\sqrt{P_{new}}}} = \frac{P_{new} \left( x_a + \frac{\Delta C}{\sqrt{P_{new}}} \right)}{x_a + \frac{\Delta C}{\sqrt{P_{new}}}} = P_{new}
\end{equation}

\noindent
As depicted in the parallel marginal tangents of Fig. \ref{fig:homothety}, the tangent slope is perfectly preserved during this expansion. Next, we verify the absolute liquidity depth expansion ($L_{new}$):
\begin{align*}
L_{new} &= \sqrt{(x_a + \Delta x)(y_a + \Delta y)} = \sqrt{\left(x_a + \frac{\Delta C}{\sqrt{P_{new}}}\right)\left(P_{new} x_a + \Delta C \sqrt{P_{new}}\right)} \\
&= \sqrt{P_{new} \left(x_a + \frac{\Delta C}{\sqrt{P_{new}}}\right)^2} = x_a \sqrt{P_{new}} + \Delta C
\end{align*}

\noindent
Since the active liquidity is $L_{active} = \sqrt{x_a \cdot y_a} = x_a \sqrt{P_{new}}$, it follows that $L_{new} = L_{active} + \Delta C$.

This geometric proof shows that the operating point moves along the price ray without changing the instantaneous reserve ratio. Accordingly, the trigger step does not itself introduce an additional reserve-ratio discontinuity that would create a separate backrunning opportunity at the instant of activation.
\end{proof}

\subsection{Protocol Economics and Capital Efficiency}
\label{4.3 Protocol Economics}
Having formalized the geometric mechanisms of the kinetic trigger in the preceding subsection, we now study their economic implications in a stylized setting. The purpose of this subsection is limited. First, we state a tolerance-bounded execution argument for saturated markets, clarifying when reduced active capital may remain competitively executable. Second, we analyze a stylized duopoly game to examine whether, under the stated liquidity-saturation, routing, volatility, and collateral-yield assumptions, the HLCP action can outperform the standard-AMM action.

\subsubsection{Tolerance-Bounded Execution and Slippage Defense}
A critical critique of any capital segregation model is the intuitive assumption that reducing active liquidity ($N<1$) necessarily degrades execution quality and repels fee-generating volume. The HLCP does not answer this critique by asserting literal equality between the active $N$-scaled curve and a fully funded CPMM. Rather, it answers it in two steps. First, for routine order sizes in saturated markets, the slippage gap $\Delta S := S_{\mathrm{physical}} - S_{\mathrm{claimed}}$ can fall below a router tolerance threshold $\varepsilon_{\mathrm{router}}$, as characterized in \eqref{eq:router-tolerance}. Second, when the deviation becomes economically material, the kinetic trigger deploys collateral so that execution continues on a physically flatter post-trigger curve.

This tolerance-bounded defense can be summarized in three operational facts. First, under proportional $N$-scaling, the instantaneous marginal price is unchanged, $P'=\frac{N\cdot y}{N\cdot x}=\frac{y}{x}=P$, so moving from full reserves to active reserves does not by itself distort the local tangent price. Second, for routine trade sizes, the relevant execution claim is the bounded condition \eqref{eq:router-tolerance}: $S_{\mathrm{claimed}}$ is a benchmark quotation target, while settlement remains on the physical active curve. Third, if the deviation becomes economically material, the kinetic trigger injects collateral from the dormant buffer so that execution proceeds against the enlarged post-trigger depth $L_{new}=L_{active}+\Delta C$ rather than against the original $N$-scaled reserve alone.

While the active engine handles the swap logic, the sequestered collateral buffer ($C$) is assumed to earn an exogenous baseline yield by being deposited into standard, risk-isolated ERC-4626 vaults \cite{EIP-4626}. This auxiliary yield is represented by $Y_C$ in the stylized payoff model below; it is not needed for the geometric analysis above, but it provides an additional channel through which lower active exposure may affect LP outcomes.

\subsubsection{Game Theoretical Analysis: Liquidity Saturation and Nash Equilibrium}
We model these incentives through a saturated-market duopoly with large background liquidity.

\begin{proposition}[Yield Dilution under Saturated Liquidity]
Let $F$ denote the aggregate daily swap fee revenue generated by the AMM, where each LP extracts a fractional yield $f = \frac{F}{L_{total}}$ proportional to their liquidity share. The marginal utility of adding liquidity to the pool is strictly piecewise, governed by a saturation threshold $L_{sat}$.
\end{proposition}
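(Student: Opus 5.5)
The plan is to make the statement precise by modeling aggregate fee revenue as a function of depth, $F = F(L_{total})$, and reading the ``marginal utility of adding liquidity'' as $\frac{d}{dL}$ of either the aggregate revenue $F$ or the per-unit yield $f = F/L_{total}$. The modeling assumption is that fee revenue is volume times fee rate. Routed volume responds to execution quality, that is, to the slippage $S(L)$. Since routers compare venues only up to the external frictional resolution $\epsilon$ (tick size and CEX fees), volume is sensitive to depth only while $|dS/dL| > \epsilon$. I will therefore write $F(L) = \gamma\, V(S(L))$, with $V$ nonincreasing in $S$, and define the two regimes by the threshold $L_{sat}$ already introduced after Lemma~\ref{lemma:convexity}.

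\textbf{Step 1: below saturation.} For $L < L_{sat}$, Lemma~\ref{lemma:convexity} gives $|dS/dL| = K/(L+K)^2 > \epsilon$. The slippage improvement from extra depth is therefore visible to routers, volume is captured, and $dF/dL > 0$. In this regime the marginal LP's share of new fees can offset the dilution term $-F/L^2$ in $df/dL = F'(L)/L - F/L^2$.

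\textbf{Step 2: above saturation.} For $L \ge L_{sat}$, the same lemma gives $|dS/dL| \le \epsilon$. The improvement lies below the router/CEX resolution, so I model $V$ as locally constant there: the volume is already fully captured, and $F'(L) = 0$. Then $f = F/L$ gives $df/dL = -F/L^2 < 0$. Each added unit of liquidity strictly dilutes the per-LP yield, while by \eqref{eq:LVR} it still adds LVR exposure proportional to $V(P_s)$, which is linear in $L$. The marginal net utility is therefore strictly negative.

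Combining the two regimes yields the piecewise characterization at $L_{sat}$. The main obstacle is Step 2: justifying that $F'(L)$ is exactly zero rather than merely small. Lemma~\ref{lemma:convexity} only makes $S'(L)$ small, not zero. I would close this gap with an explicit routing assumption: order flow is allocated by quoted slippage rounded to the tick $\epsilon$, so $V$ is a step function of $S$ that is constant once the gain is below $\epsilon$. This is the saturated-market routing assumption the section announces. If that assumption is too strong, the fallback is a weaker statement: $F'(L) \le \gamma\,\epsilon\, |V'|$, so $df/dL < 0$ as soon as $F/L > \gamma\,\epsilon\,|V'|$. This still gives a (softer) piecewise threshold.
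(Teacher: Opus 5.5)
Your proposal follows essentially the same route as the paper, which also treats the proposition as a modeling claim: below $L_{sat}$ extra depth may still raise $F$, while above it fee revenue is taken to plateau at $F_{\max}$, so $f=F_{\max}/L_{total}$ declines while LVR keeps scaling with active exposure---and, like you, the paper simply assumes this plateau (``reasonably modeled as approaching an upper bound'') rather than deriving it from Lemma~\ref{lemma:convexity}. One caution on your suggested closure: if $V$ is a step function of quoted slippage with tick $\epsilon$, it becomes constant only once $S(L)$ itself falls below the tick, i.e.\ for $L > K/\epsilon - K$, which exceeds the paper's threshold $\sqrt{K/\epsilon}-K$ whenever $K>\epsilon$, so it is cleaner just to posit the plateau at $L_{sat}$ directly, as the paper does.
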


Economically, this saturation threshold dictates that when $L_{total} < L_{sat}$, additional liquidity can still reduce execution frictions and may therefore support higher aggregate fee revenue $F$. Once $L_{total} > L_{sat}$ for the trade sizes under consideration, further liquidity has limited marginal effect on slippage, so fee revenue is reasonably modeled as approaching an upper bound $F_{\max}$. In that regime, the per-unit fee yield $f=\frac{F_{\max}}{L_{total}}$ declines as total active liquidity expands, while LVR continues to scale with active exposure. This observation motivates, rather than by itself proves, the stylized payoff comparison below.

\paragraph{The Duopoly Game Formulation}
To study that comparison, we model a highly saturated market using a two-player game augmented with large exogenous background liquidity.

\begin{remark}[Assumption on Proportional Fee Routing]
While real-world order routing by DEX aggregators involves highly complex endogenous dynamics, this stylized duopoly game assumes proportional fee routing. Specifically, invoking the tolerance-bounded routing condition $\Delta S \le \varepsilon_{\mathrm{router}}$ from Section~\ref{4.2 HLCP}, we assume that, for standard retail trades in saturated markets, this bound is sufficient for routers to treat the HLCP as competitively executable. This is a stylized assumption about router acceptance, not a claim that the physical and benchmark curves are identical.
\end{remark}

We evaluate all strategic payoffs over a unified time horizon $T>0$, measured in years. Let $F_{\max}(T)$ denote the aggregate fee revenue earned by the saturated meta-market over this horizon. Under the constant-volatility approximation used below, define the cumulative LVR of active capital $W$ over $T$ by
\begin{equation}
\label{eq:lvr-horizon}
\mathrm{LVR}_W(T) := \frac{\sigma^2}{8}WT,
\end{equation}
which is the constant-$V(P_t)=W$ specialization of \eqref{eq:LVR}. Let $r_c$ denote the annualized collateral yield rate on dormant capital, and define the corresponding cumulative collateral yield over horizon $T$ by $Y_C(T) := (1-N)Wr_cT.$
All payoff entries below are therefore expressed in currency units over the same horizon $T$; no term mixes daily, instantaneous, and annualized quantities.

Let $X$ denote the massive, pre-existing passive liquidity in the broader market pair ($X \gg L_{sat}$). Assuming trade volume is routed strictly proportionally to active market depth, a player's fee payoff over horizon $T$ is determined by its fraction of total active liquidity (including $X$), multiplied by $F_{\max}(T)$. The adverse selection cost scales with active exposure through $\mathrm{LVR}_W(T)$, and the dormant buffer earns $Y_C(T)$.

\begin{align*}
\pi_{std,std}(T) &= \left( \frac{W}{X + 2W} \right) F_{\max}(T) - \mathrm{LVR}_W(T), \\
\pi_{HLCP,std}(T) &= \left( \frac{N \cdot W}{X + (1+N)W} \right) F_{\max}(T) - N \cdot \mathrm{LVR}_W(T) + Y_C(T), \\
\pi_{std,HLCP}(T) &= \left( \frac{W}{X + (1+N)W} \right) F_{\max}(T) - \mathrm{LVR}_W(T), \\
\pi_{HLCP,HLCP}(T) &= \left( \frac{N \cdot W}{X + 2N \cdot W} \right) F_{\max}(T) - N \cdot \mathrm{LVR}_W(T) + Y_C(T).
\end{align*}

\begin{table}[h]
\centering
\renewcommand{\arraystretch}{1.5}
\begin{tabular}{c|c|c}
\textbf{Player A \textbackslash{} Player B} & \textbf{Std AMM} ($W$ active) & \textbf{HLCP} ($N \cdot W$ active) \\
\hline
\textbf{Std AMM} ($W$ active) & $(\pi_{std, std}, \quad \pi_{std, std})$ & $(\pi_{std, HLCP}, \quad \pi_{HLCP, std})$ \\
\hline
\textbf{HLCP} ($N \cdot W$ active) & $(\pi_{HLCP, std}, \quad \pi_{std, HLCP})$ & $(\pi_{HLCP, HLCP}, \quad \pi_{HLCP, HLCP})$ \\
\end{tabular}
\caption{Strategic Payoff Matrix under Liquidity Saturation}
\label{tab:game_theory}
\end{table}


\begin{theorem}[Nash Equilibrium under Hyper-Saturation]
\label{thm:nash}
Under hyper-saturated background liquidity ($X \to \infty$), and provided that local market volatility is strictly positive ($\sigma > 0$) or the collateral yield rate is non-zero ($r_c > 0$), the HLCP action yields a strictly higher payoff than the standard-AMM action in the stylized two-player game above. Hence the $(\mathrm{HLCP},\mathrm{HLCP})$ profile is the unique Nash equilibrium of that limit game.
\end{theorem}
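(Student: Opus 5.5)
We will show that, for either opponent action, HLCP is a strict best response in the limit game. Dominance then yields the Nash equilibrium, and uniqueness follows as well. Fix Player B's action $s\in\{\mathrm{std},\mathrm{HLCP}\}$. We compare Player A's two payoffs through the difference
\[
D_s(X) := \pi_{HLCP,s}(T)-\pi_{std,s}(T).
\]
We split $D_s$ into a fee part and a non-fee part. Using $\mathrm{LVR}_W(T)=\frac{\sigma^2}{8}WT$ and $Y_C(T)=(1-N)Wr_cT$, the non-fee part does not depend on $X$ or on $s$. It equals
\[
(1-N)\,\mathrm{LVR}_W(T)+Y_C(T)=(1-N)WT\Big(\tfrac{\sigma^2}{8}+r_c\Big).
\]

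Next we handle the fee part, which is the step requiring the hyper-saturation assumption. For $s=\mathrm{std}$ the fee difference is
\[
F_{\max}(T)\Big(\tfrac{NW}{X+(1+N)W}-\tfrac{W}{X+2W}\Big).
\]
For $s=\mathrm{HLCP}$ it is
\[
F_{\max}(T)\Big(\tfrac{NW}{X+2NW}-\tfrac{W}{X+(1+N)W}\Big).
\]
Each share is bounded by $W/X$, so the fee difference is bounded in absolute value by $F_{\max}(T)W(1+N)/X$. Here we treat $F_{\max}(T)$ as the saturated fee ceiling of the meta-market, fixed independently of $X$ once $X\gg L_{sat}$, as in the Yield Dilution proposition. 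Under that reading, the fee difference tends to $0$ as $X\to\infty$. Hence
\[
\lim_{X\to\infty} D_s(X)=(1-N)WT\Big(\tfrac{\sigma^2}{8}+r_c\Big)
\]
for both values of $s$.

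The main obstacle is strictness. We need $N<1$: the theorem implicitly requires that HLCP is a genuinely distinct action, since for $N=1$ the two actions coincide. We also need $W,T>0$ and $r_c\ge 0$. Under the hypothesis $\sigma>0$ or $r_c>0$, the bracket $\frac{\sigma^2}{8}+r_c$ is then strictly positive, so the limit is strictly positive. By continuity there is also a finite threshold $X_0$ beyond which $D_s(X)>0$ for both $s$. This makes the $X\to\infty$ statement robust rather than merely asymptotic, and $X_0$ can be taken explicitly of order $F_{\max}(T)(1+N)/\big((1-N)T(\sigma^2/8+r_c)\big)$.

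Finally, since HLCP strictly dominates std for each player against every opponent action, $(\mathrm{HLCP},\mathrm{HLCP})$ is a Nash equilibrium. It is also the unique one. In any profile where some player plays std, that player has a strictly profitable deviation to HLCP, so no such pure profile is an equilibrium. Strict dominance likewise rules out mixed equilibria, because a mixed strategy placing positive weight on a strictly dominated action is never a best response. By symmetry of the payoff matrix in Table~\ref{tab:game_theory}, the same argument applies to Player B.
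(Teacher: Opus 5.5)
Your proposal is correct and follows the paper's proof: for each fixed opponent action you reduce the deviation condition to the $X$-independent gain $(1-N)WT(\sigma^2/8+r_c)$ against a fee gap that vanishes as $X\to\infty$, exactly as in the paper's Scenarios A and B. Your additions simply make explicit what the paper leaves implicit. These are the $O(1/X)$ fee bound with the finite threshold $X_0$, the side conditions $N<1$, $r_c\ge 0$ and $F_{\max}(T)$ independent of $X$, and the uniqueness argument that excludes mixed equilibria.
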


\begin{proof}
To establish this strict dominance beyond algebraic limits, we evaluate the unilateral deviations through an economic lens. Under hyper-saturation, the aggregate fee revenue asymptotically plateaus at a maximum bound $F_{max}$.

\vspace{0.2cm}\noindent
\textbf{Step 1: Unilateral Deviation (Scenario A).} Consider the case where Player B commits to the Standard AMM. For Player A to deviate to the HLCP, the condition $\pi_{HLCP, std} > \pi_{std, std}$ must hold. To operationalize this condition within a static snapshot, we transition from the cumulative LVR integral to an instantaneous expected rate via its time derivative. Under the unit convention above, the cumulative LVR over horizon $T$ is $\mathrm{LVR}_W(T)=\frac{\sigma^2}{8}WT$. By substituting this instantaneous rate and rearranging the payoff functions, the deviation inequality reduces to comparing the defensive yield gains against the fee opportunity cost ($\Delta F$):
\begin{equation}
(1-N)W\left(\frac{\sigma^2}{8}+r_c\right)T > \Delta F(T),
\end{equation}
where $\Delta F(T) := F_{\max}(T)\cdot W\left(\frac{1}{X+2W}-\frac{N}{X+(1+N)W}\right).$ Since $T>0$ is common to all terms, the inequality is equivalent to $(1-N)\left(\frac{\sigma^2}{8}+r_c\right) > \frac{\Delta F(T)}{WT}.$ As $X\to\infty$, the fee-loss term vanishes, so the deviation condition collapses to $(1-N)\left(\frac{\sigma^2}{8}+r_c\right) > 0.$


\vspace{0.2cm}\noindent
\textbf{Step 2: The Free-Rider Vulnerability (Scenario B).} Assume Player B attempts to opportunistically defect to the standard CPMM ($N=1$) while Player A remains committed to the HLCP. For the cooperative HLCP state to be a strictly stable Nash Equilibrium against this free-rider defection, the stability condition $\pi_{HLCP, HLCP} > \pi_{std, HLCP}$ must hold. By substituting the instantaneous expected rate ($\mathrm{LVR}_W = W \frac{\sigma^2}{8}$) and accumulating over horizon $T$, the inequality simplifies to:
\begin{equation}
(1-N)W\left(\frac{\sigma^2}{8}+r_c\right)T > \Delta F'(T),
\end{equation}

\noindent
where $\Delta F'(T)$ represents the cumulative marginal fee gain captured by the defection. Since $T>0$ is common to all terms, the stability inequality is equivalent to $(1-N)\left(\frac{\sigma^2}{8}+r_c\right) > \frac{\Delta F'(T)}{WT}.$ As $X \to \infty$, the hyper-saturated routing benefit vanishes ($\lim_{X \to \infty} \frac{\Delta F'(T)}{WT} = 0$). Consequently, the stability inequality collapses to the identical conditional state: 
\begin{equation}
(1-N) \left( \frac{\sigma^2}{8} + r_c \right) > 0.
\end{equation}
Under the theorem assumptions, the HLCP action therefore yields a strictly higher payoff than reverting to the standard-AMM action in this stylized limit game.
\end{proof}

\begin{corollary}[Macro-Equilibrium and Pareto Improvement in the Stylized Limit]
In the symmetric hyper-saturated limit, if background liquidity also compresses active exposure proportionally to $N$, then the individual fee share is unchanged:
\begin{equation*}
\frac{N \cdot W}{N \cdot X + 2N \cdot W}=\frac{W}{X + 2W}
\end{equation*}
Consequently, in this stylized limit, players preserve the same fee share while reducing active LVR exposure by the factor $N$ and adding collateral yield $Y_C(T)$. Under the maintained assumptions, the HLCP outcome Pareto-improves on the standard-AMM outcome.
\end{corollary}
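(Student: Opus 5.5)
The plan is to verify the displayed fee-share identity directly and then compare the symmetric payoffs $\pi_{HLCP,HLCP}(T)$ and $\pi_{std,std}(T)$ term by term.

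\textbf{Step 1: the identity.} Under the maintained assumption, background liquidity is also compressed, so the HLCP-side denominator becomes $NX+2NW$. Factoring $N>0$ out of both numerator and denominator gives
\begin{equation*}
\frac{NW}{N(X+2W)}=\frac{W}{X+2W}.
\end{equation*}
So the fee term of the modified $\pi_{HLCP,HLCP}(T)$ is exactly $\frac{W}{X+2W}F_{\max}(T)$, the same as in $\pi_{std,std}(T)$. This equality holds for every $X$, not only in the limit. The role of hyper-saturation is just to keep $F_{\max}(T)$ as the common plateaued revenue for both profiles, as in the proof of Theorem~\ref{thm:nash}.

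\textbf{Step 2: the payoff difference.} Subtracting the two payoffs, the fee terms cancel and we get
\begin{align*}
\pi_{HLCP,HLCP}(T)-\pi_{std,std}(T) &= (1-N)\,\mathrm{LVR}_W(T)+Y_C(T)\\
&=(1-N)W\left(\frac{\sigma^2}{8}+r_c\right)T.
\end{align*}
This uses \eqref{eq:lvr-horizon} for $\mathrm{LVR}_W(T)$ and the definition $Y_C(T)=(1-N)Wr_cT$. Since $\sigma^2\ge 0$, $r_c\ge0$, $W,T>0$ and $N\le 1$, the difference is nonnegative. It is strictly positive when $N<1$ and either $\sigma>0$ or $r_c>0$, which are the hypotheses of Theorem~\ref{thm:nash}. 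The game is symmetric, so each player gains the same amount. Hence the HLCP outcome weakly dominates the standard outcome for both players, with strict gain for each: this is a Pareto improvement.

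\textbf{Step 3: equilibrium and the main obstacle.} The "macro-equilibrium" label follows from Theorem~\ref{thm:nash}, which already gives $(\mathrm{HLCP},\mathrm{HLCP})$ as the unique Nash equilibrium in the limit. The main obstacle is consistency. Theorem~\ref{thm:nash} uses the uncompressed-background payoffs, while the corollary uses compressed background liquidity. I would argue that the equilibrium property is unaffected as $X\to\infty$: in either specification the fee-share differences between deviations vanish, so the deviation inequalities again reduce to $(1-N)(\sigma^2/8+r_c)>0$. The Pareto comparison then rests only on the exact cancellation from Step 1.
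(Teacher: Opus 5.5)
Your proposal is correct and follows the paper's own reasoning: you factor $N$ out of the compressed fee share to obtain the displayed identity, then compare $\pi_{HLCP,HLCP}(T)$ with $\pi_{std,std}(T)$ term by term, giving a difference of $(1-N)W\left(\frac{\sigma^2}{8}+r_c\right)T$, which is strictly positive when $N<1$ and the hypotheses of Theorem~\ref{thm:nash} hold. Your side remarks are sound refinements of the paper's brief justification, and Step~3 on the equilibrium goes beyond what the corollary actually asserts, so it is not needed: the identity holds for every $X$, and hyper-saturation only serves to keep $F_{\max}(T)$ common to both profiles.
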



\section{Empirical Validation and Discussion}
\label{5. Empirical Validation}
This section presents a two-tiered empirical validation of the HLCP architecture. The purpose is not to claim market-complete validation, but to evaluate two bounded questions in sequence. First, \textbf{Experiment 1 (Micro-Resilience and Buffer Integrity)} tests whether the trigger rule remains bounded away from one-shot total depletion under an explicitly adversarial stochastic-volatility-with-jumps (SVJ) stress path at 12-second block resolution. Second, \textbf{Experiment 2 (Long-term Historical Backtest)} evaluates whether lower active exposure can improve realized net LP outcomes on the 2025 ETH path under the conservative assumption $Y_C(T)=0$. This ordering separates mechanism stress testing from long-horizon profitability measurement.

\subsection{Experiment 1: Structural Resilience under Stochastic Volatility Shocks}
\label{sec:5.1 experiment 1}
Experiment 1 serves as an empirical stress test evaluating the mechanical dynamics of the HLCP during systemic crises. Departing from random walk assumptions, we implement a Stochastic Volatility with Jumps (SVJ) framework governed by a Cox-Ingersoll-Ross (CIR) process. This architecture simulates the high-frequency, non-linear dynamics of DeFi markets, creating a hostile environment to evaluate LVR exposure.

\subsubsection{The Coupled Dynamics: From Variance to Price}
The stress environment is generated by coupled CIR-SVJ dynamics. The variance process is
\begin{equation}
\label{eq:cir-final}
dV_t = \kappa(\theta - V_t) dt + \xi \sqrt{V_t} dW_t^V
\end{equation}

\noindent
It fulfills a critical dual mandate within the simulation. First, its diffusion term ($\xi \sqrt{V_t} dW_t^V$) introduces the stochastic ``volatility of volatility,'' ensuring that the market's friction is never static but subject to continuous, unpredictable fluctuations. Second, and more importantly during a crisis, its drift term ($\kappa(\theta - V_t) dt$) acts as a mean-reverting gravitational force that pulls the variance toward a long-term baseline ($\theta$) at a speed of $\kappa$. Following a systemic shock where the variance regime shifts ($\theta \to \theta_{shock}$), this mechanism ensures that the inflated variance eventually decays back to a normal equilibrium, mimicking the market's natural ``cooling off'' period.

Building upon this, the price process is modeled as a stochastic-volatility jump diffusion. Let $W_t^V$ and $W_t^\perp$ be independent standard Brownian motions and define
\begin{equation}
dW_t^S := \rho\, dW_t^V + \sqrt{1-\rho^2}\, dW_t^\perp,
\qquad \rho \in [-1,1].
\label{eq:correlated-bm}
\end{equation}
Then $d\langle W^S, W^V\rangle_t = \rho\,dt$, so $\rho$ controls the instantaneous correlation between variance shocks and price shocks. The coupled dynamics are therefore governed by \eqref{eq:cir-final} and
\begin{equation}
\label{eq:svj-final}
dS_t = \mu S_{t^-}\,dt + \sqrt{V_t}\,S_{t^-}\,dW_t^S + S_{t^-}\,dJ_t,
\qquad
dJ_t := (e^{Y_t}-1)\,dN_t,
\end{equation}
where $(N_t)_{t\ge 0}$ is a Poisson process with intensity $\lambda_J$, the jump marks $Y_t$ are i.i.d.\ $\mathcal{N}(\mu_J,\sigma_J^2)$, and $S_{t^-}$ denotes the left limit of the price process at jump times. Equivalently, $e^{Y_t}$ is the gross jump multiplier, so $\mu_J<0$ produces downward jumps on average. It is important to distinguish the roles of the two stochastic mechanisms. The term $\sqrt{V_t}\,S_{t^-}\,dW_t^S$ captures continuous diffusion under stochastic variance, whereas $dJ_t$ captures discrete price discontinuities.

Time is measured in years, with $\Delta t = \frac{12}{365 \times 24 \times 3600}.$ The state variable $V_t$ is annualized variance, so the Euler-step diffusion scale is $\sqrt{V_t\Delta t}$. If $\sigma_{annual}$ denotes annualized volatility, then the corresponding per-step volatility is $
\sigma_{step} = \sigma_{annual}\sqrt{\Delta t},$ while the corresponding baseline variance level is $\theta_{base} = \sigma_{annual}^2.$ We calibrate $\sigma_{annual}\approx 74.56\%$ from the realized 2025 ETH path (sourced via Etherscan) used in the backtest.

\subsubsection{Adversarial Regime Calibration}
The stress scenario contains two distinct shock channels. First, at the designated crisis time $t_s$, we impose a negative price jump in the SVJ component by conditioning on a jump arrival in the Poisson process $N_t$ and drawing its size from the lognormal jump law in \eqref{eq:svj-final}; this is the role of $dJ_t$. Second, independently of the jump term, we shift the variance target in the CIR process from $\theta_{base}$ to an elevated crisis regime $\theta_{shock} = 20\,\theta_{base}$ over the stress window. This separation keeps jump risk strictly in the price equation and variance stress strictly in the variance equation, eliminating the category error between price jumps and variance shocks.

The transmission of this volatility shock into the asset price is governed by an endogenous liquidity crunch mechanism. By introducing a negative correlation ($\rho = -0.5$) between the price diffusion and the variance process, we replicate the mechanics of DeFi liquidation spirals. This architecture serves two structural purposes:
\begin{enumerate}[(a)]
    \item From \eqref{eq:correlated-bm}, the variance of the component of the price diffusion that is correlated with variance shocks is $\rho^2 dt$. With $\rho=-0.5$, one quarter of one-step Brownian variance is carried by the variance-correlated component and three quarters by the orthogonal component. We use this only as a decomposition of the diffusion driver, not as a structural claim about observed market variance shares.
    
    \item The negative sign of $\rho$ governs the directional mechanics of the crisis. The linear diffusion equation, \eqref{eq:correlated-bm}, demonstrates how shocks are transmitted. When a systemic panic triggers a massive volatility spike ($dW_t^V > 0$), the negative correlation introduces a downward component in the correlated part of the price diffusion ($dW_t^S < 0$).
\end{enumerate}
Together, this mechanism orchestrates the endogenous dynamics where panic triggers a cascading price crash.

To maintain the physical fidelity of this hostile environment, we must prevent the variance from devolving into a predictable deterministic decay. We inject significant stochastic noise ($\xi$), which is strictly calibrated to satisfy the Feller Condition ($2\kappa\theta_{base} \ge \xi^2$). Utilizing the baseline variance for this calibration ensures that the variance process remains strictly positive and mathematically robust across all regimes, delivering maximum stochastic stress to the kinetic trigger without inducing numerical instability.

\subsubsection{Mechanism Dynamics: Active Concentration and Buffer Integrity}
For Experiment~1, the trigger sensor is implemented as a stress-diagnostic proxy rather than as a full reserve-path arbitrage simulation. Direct numerical evaluation of \eqref{eq:LVR} in this setting would require an endogenous arbitrage-reserve simulator specifying, at each 12-second step, arbitrageur arrival times, fee-induced no-trade regions, and routing behavior across available liquidity. Because the objective of Experiment~1 is instead to isolate the local responsiveness of the HLCP control rule under a prescribed crisis path, Panel~A uses a reduced-form stress proxy that preserves the trigger geometry while abstracting from those additional equilibrium layers.

Let $S_t$ denote the exogenous crisis price path generated by the coupled SVJ process, and define the one-step trigger input by
\begin{equation}
\delta p_t := \left| \frac{S_{t+\Delta t}}{S_t} - 1 \right|, 
\qquad
\phi_t := \max\{0,\delta p_t - \tau\}.
\end{equation}
The standard-AMM path in Panel~A is then computed from the stylized capital-loss proxy
\begin{equation}
\ell^{std}_{t+\Delta t} = \ell^{std}_t + K^{std}_t \frac{\phi_t^2}{8},
\end{equation}
where $K^{std}_t$ denotes the currently exposed benchmark capital. For the HLCP, the control rule implies
\begin{equation}
\Delta C_t = \frac{N \alpha \phi_t C_t}{1 + N \alpha \phi_t},
\qquad
K^{new}_t := K^{act}_t + \Delta C_t,
\end{equation}
and the corresponding stress-path loss proxy is attenuated by the post-trigger depth ratio:
\begin{equation}
\ell^{HLCP}_{t+\Delta t} =
\ell^{HLCP}_t + K^{act}_t \frac{\phi_t^2}{8}
\cdot
\frac{K^{act}_t}{K^{new}_t}.
\end{equation}
Accordingly, Panel~A should be interpreted as a controlled stress-path capital-loss proxy induced by the trigger variable, not as a direct numerical discretization of the continuous-time LVR integral in \eqref{eq:LVR}.

Fig.~\ref{fig:svj_stress} reports the resulting stress-path comparison between the HLCP and a standard CPMM. In Panel~A, the forced negative jump at $t=8$ hours, together with the temporary shift to $\theta_{shock}$, moves the market into the crisis regime. Along this path, the standard AMM accumulates materially larger stress-path capital loss than the HLCP throughout the stress window. By the end of the 8-hour crisis interval ($t=16$ hours), the standard AMM reaches a cumulative capital loss of approximately $1.90\%$, whereas the HLCP reaches approximately $0.49\%$. Over the full 24-hour simulation, the losses settle at $1.93\%$ and $0.51\%$, respectively, implying an approximately $73.8\%$ lower cumulative loss for the HLCP on this path. The point of the experiment is therefore not that the trigger avoids deployment under stress; it is that, under an explicitly adverse path, the trigger can deploy aggressively enough to cap loss growth while remaining bounded away from one-shot exhaustion under the imposed control rule.

\begin{figure}[htbp]
    \centering    
    \includegraphics[width=1.0\linewidth]{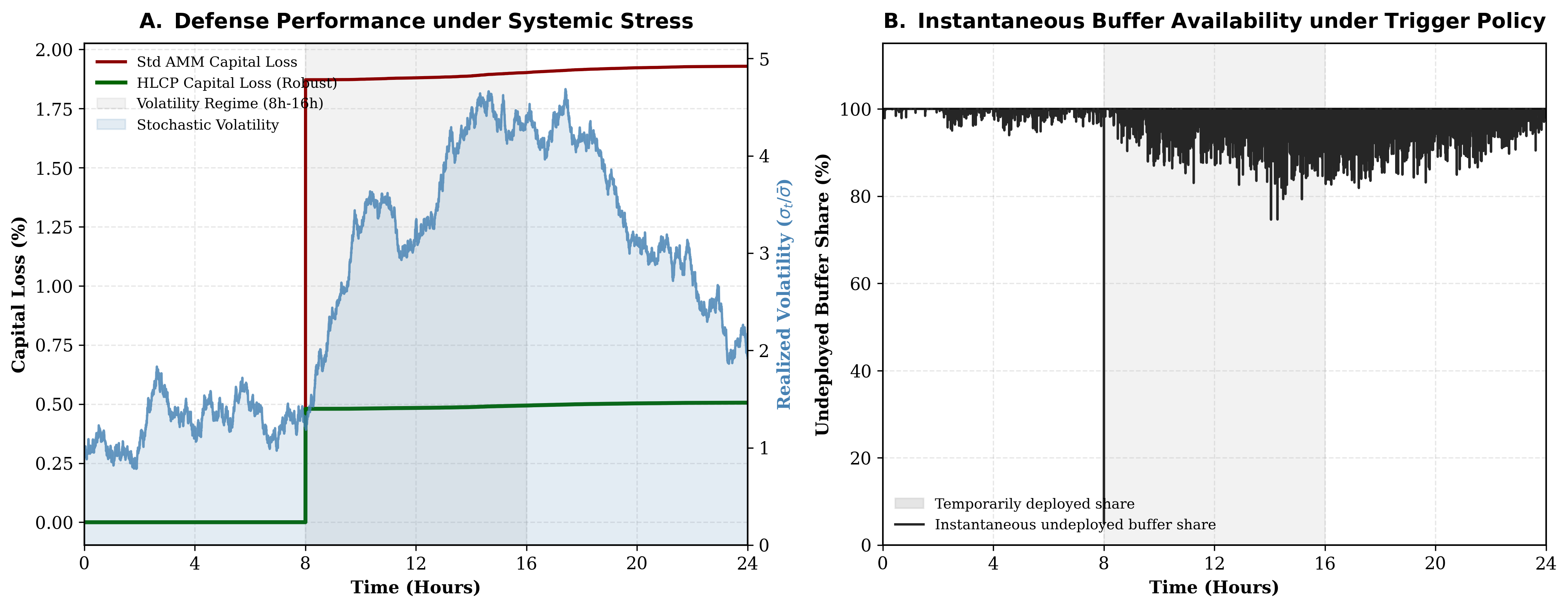}
    \caption{SVJ stress-path comparison between the standard AMM and the HLCP. (A) Cumulative capital-loss trajectories during the forced jump, the elevated-variance window ($8$--$16$ hours), and the subsequent relaxation. (B) Instantaneous policy-implied undeployed buffer share under the trigger rule. Immediately after the forced jump, the undeployed share falls to approximately $4.9\%$, corresponding to a temporary deployment of approximately $95.1\%$ of the buffer, and then returns toward baseline as the shock variable subsides.}
    \label{fig:svj_stress}
\end{figure}

Panel~B should be read as a pointwise diagnostic of the trigger rule rather than as a stateful inventory path for remaining buffer capital. The sharp dip immediately after the forced jump is the mathematically expected response of the control law when $\alpha = 100$, $N = 0.5$, and the realized price shock is approximately $38.8\%$: the rule temporarily deploys about $95.1\%$ of the buffer, leaving about $4.9\%$ undeployed at the peak response. This is an intentionally adversarial stress-test operating point, not a claim about routine market conditions. The relevant structural result is that, even at this extreme point, the policy remains bounded away from one-shot total depletion, and the policy-implied undeployed share mechanically returns toward $100\%$ as the shock dissipates.

\subsection{Experiment 2: Historical Backtest and Capital Efficiency}
Experiment 2 evaluates the HLCP architecture via an empirical backtest against real-world market data. While theoretical frameworks often evaluate LVR in isolation, net LP profitability relies on the trade-off between accumulated swap fees and realized LVR. Therefore, we map the continuous LVR mechanics directly onto the historical price trajectory and empirical fee generation of the Ethereum mainnet.

\subsubsection{Rationale for V2 Empirical Baseline}
To ground the backtest, we use a \textit{Uniswap} V2 baseline rather than concentrated-liquidity designs such as Uniswap V3. Equation~\eqref{eq:LVR} is derived for a uniform CPMM exposure profile, whereas concentrated liquidity introduces endogenous range selection and path-dependent exposure. Using V2 therefore provides the cleanest control case for isolating the effect of lower active exposure; extending the analysis to concentrated-liquidity ranges is left for future work.

\subsubsection{Methodology and the Conservative Worst-Case Assumption}
To avoid notation drift, we reserve $r_c$ for the annualized collateral yield rate in Section~\ref{4.3 Protocol Economics} and $Y_C(T)$ for realized collateral yield over a finite horizon $T$; in the present backtest, $Y_C(T)=0$. To construct the evaluation environment, we use the daily closing prices of Ethereum (ETH) throughout 2025, which imply a realized annual volatility of $\sigma \approx 74.56\%$. Concurrently, we extract fee generation from the 2025 Uniswap V2 USDC/ETH pool to establish the benchmark trading environment. The empirical on-chain data imply an average TVL of approximately $\$31.21$ million and an average daily trading volume of approximately $\$3.15$ million, corresponding to annual fee revenue of approximately $\$3.45$ million under the V2 fixed $0.3\%$ fee tier. For the benchmark LP, this implies a gross APR of $11.04\%$ and an effective APY of $11.67\%$.

We simulate two LP positions over this historical path: a standard V2 CPMM with $100\%$ active capital exposure, and an HLCP with an active scaling factor of $N=0.5$. To isolate the architectural performance, we assume a $0\%$ external yield on the HLCP's dormant collateral buffer ($Y_C(T)=0$). This establishes a conservative worst-case scenario in which the buffer generates no auxiliary revenue.

\subsubsection{Observations on Net Yield under Reduced Active Exposure}
Building upon the adversarial SVJ stress path in Section~\ref{sec:5.1 experiment 1}, we next evaluate long-horizon profitability on the 2025 ETH path. The cumulative trajectories in Fig.~\ref{fig:historical_backtest} show that the standard Uniswap V2 CPMM experiences substantial LVR drag despite earning the benchmark gross fee APR of $11.04\%$. By contrast, the HLCP backtest applies the stylized routing assumption introduced in Section~\ref{4.3 Protocol Economics}: for routine flow in saturated markets, the router-tolerance condition $\Delta S \le \varepsilon_{\mathrm{router}}$ is treated as sufficient for benchmark-level fee capture, while only the active fraction $N=0.5$ remains exposed to LVR.

Under the zero-yield assumption ($Y_C(T)=0$), the benchmark CPMM ends the sample at a final net yield of $+3.84\%$, whereas the HLCP ends at $+7.51\%$. This is a net-yield improvement of $+3.66$ percentage points on the sample path. We interpret this result narrowly: under the historical path and the stylized routing assumption used here, lower active exposure improves realized net LP performance even without auxiliary buffer yield.
\begin{figure}[htbp]
    \centering    
    \includegraphics[width=1.0\linewidth]{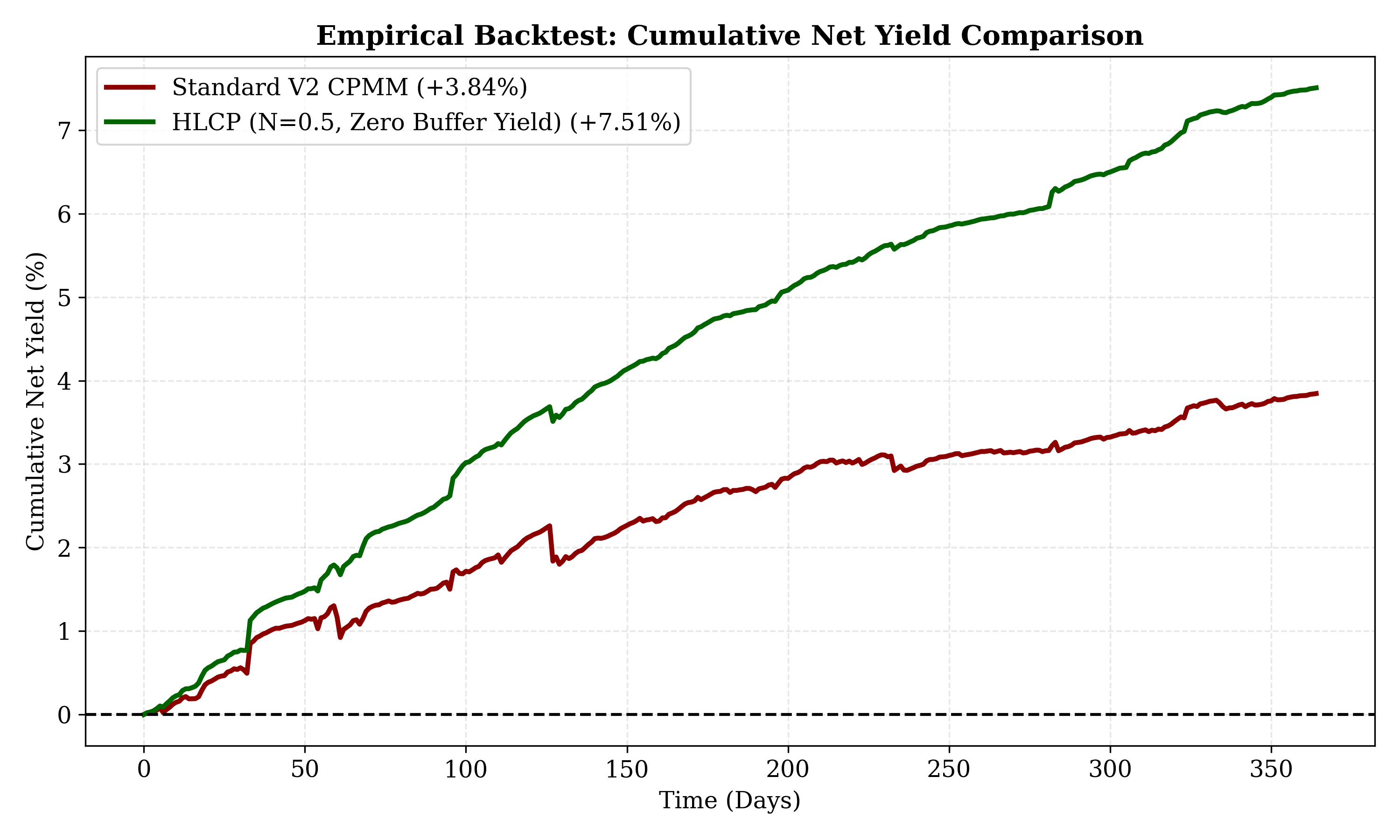} 
    \caption{Empirical backtest of cumulative net yield over the 2025 Ethereum price trajectory. Both the standard V2 CPMM and the HLCP are evaluated under the same benchmark gross-fee assumption; the difference arises from lower active exposure to realized LVR in the HLCP case. The HLCP path is shown for $N=0.5$ and $Y_C(T)=0$.}
    \label{fig:historical_backtest}
\end{figure}


\section{Conclusion and Future Work}
\label{6. Conclusion}
This paper studies the Hybrid Liquidity-Collateral Pool (HLCP), a stylized redesign of automated market making in which only a fraction of capital remains continuously exposed to the active constant-product curve. The central observation is geometric: in a CPMM, execution slippage and marginal-price deviation are distinct objects, and the latter remains the relevant quantity for adverse selection. This motivates separating benchmark execution depth from active capital exposure and introducing a trigger-based collateral policy that reacts to economically meaningful price deviations.

Within this modeling framework, the analysis supports three bounded conclusions. First, beyond the saturation region identified in Section~\ref{4.1 limits of passive liquidity}, adding further physical liquidity yields sharply diminishing execution benefits while preserving LVR exposure. Second, under the protocol-enforced control law in Section~\ref{4.2 HLCP}, collateral injection remains bounded away from one-shot total exhaustion for any finite shock. Third, in the stylized routing game of Section~\ref{4.3 Protocol Economics} and in the 2025 Uniswap V2 backtest of Section~\ref{5. Empirical Validation}, reducing active exposure can improve net LP outcomes even under the conservative assumption $Y_C(T)=0$.

These conclusions should be interpreted at the level of the model used here. The routing layer is stylized, the backtest uses a uniform V2 baseline rather than concentrated liquidity, and the stress experiment is best understood as a controlled mechanism test rather than a market-complete calibration exercise. Future work should therefore study endogenous router behavior, richer jump calibration, and extensions of the HLCP mechanism to concentrated-liquidity settings such as Uniswap V3.




\bibliography{lipics-v2021-manuscript}

\end{document}